\RequirePackage{fix-cm}
\documentclass[smallextended,envcountsect,envcountsame]{svjour3}

\smartqed
\journalname{noname}

\usepackage{amsmath,amssymb}
\usepackage{graphicx}
\usepackage{array}
\usepackage{microtype}
\usepackage[hidelinks]{hyperref}
\newcommand{\LABS}{\mathrm{LABS}}
\newcommand{\SLCP}{\mathrm{SLCP}}

\newcommand{\Ptime}{\mathsf P}
\newcommand{\NP}{\mathsf{NP}}

\newcommand{\Ftwo}{\mathbb F_2}
\newcommand{\Bset}{\mathbb B}
\newcommand{\Sset}{\mathbb S}

\newcommand{\enc}{\operatorname{enc}}

\newcommand{\dbn}[2]{%
  [\![#1,#2]\!]%
}
\newcolumntype{L}[1]{>{\raggedright\arraybackslash}p{#1}}

\begin{document}

% - - - - - - - - - - - - - - - - - - - - - - - - - - - - - - - - - - - - - - - - - - - - -
%.     Frontmatter
% - - - - - - - - - - - - - - - - - - - - - - - - - - - - - - - - - - - - - - - - - - - - -

\title{On the Support Geometry and Complexity of LABS}
\titlerunning{On the Support Geometry and Complexity of LABS}

\author{William Cruz-Santos}
\authorrunning{William Cruz-Santos}

\institute{Tecnológico Nacional de México Campus Querétaro, Av Tecnológico S/N, Centro Histórico, 76000 Santiago de Querétaro, Qro. México \at \email{williamdelacruz@gmail.com, william.dd@queretaro.tecnm.mx}}
\date{}

\maketitle

\begin{abstract}
LABS are a fundamental combinatorial optimization problem arising in sequence design and statistical physics, yet the formal complexity of the standard fixed-geometry problem remains unresolved. We study how zero-energy complexity changes when support geometry is included as part of the input and then progressively restricted. For unary standard LABS, we establish membership in $\NP$ and a sparsity/Mahaney consequence: Karp NP-hardness would imply $\Ptime=\NP$. For Pair-Set-LABS and Masked-LABS, zero-energy feasibility is polynomial-time decidable when every pair group or masked row has at most three terms and is NP-complete when four terms are allowed. NP-completeness persists when active supports are restricted to contiguous intervals.

\keywords{Low autocorrelation binary sequences \and Computational complexity \and NP-completeness \and Combinatorial optimization}
\end{abstract}

% - - - - - - - - - - - - - - - - - - - - - - - - - - - - - - - - - - - - - - - - - - - - -
%.     Main text
% - - - - - - - - - - - - - - - - - - - - - - - - - - - - - - - - - - - - - - - - - - - - -

\section{Introduction}\label{sec:introduction}

The Low Autocorrelation Binary Sequences (LABS) problem seeks binary spin sequences $s\in\{-1,+1\}^N$ that minimize the sum of squared aperiodic autocorrelations. The objective, which appears as the energy of the Bernasconi model \cite{Bernasconi1987}, has long been studied in sequence design, combinatorial optimization, and statistical physics. Exact optimization is computationally demanding: Packebusch and Mertens give an exact algorithm with running time $\Theta(N1.73^N)$ and compute all optimal sequences through $N=66$ \cite{PackebuschMertens}, while larger lengths have been approached mainly by heuristic and restricted-space methods such as warning propagation and parallel GPU-based search \cite{Kotsireas2017,BoskovicHerzogBrest2024,ZhangEtAl2025,PsenicnikEtAl2026}. 

These computational results extend the range of LABS instances that can be studied, but they do not establish the complexity of a different question: whether there exists a binary sequence whose energy does not exceed a given bound. This is the decision problem considered here for standard LABS. An important feature of this problem is that the sequence length completely determines which products contribute to the autocorrelations. These contributions cannot be independently selected or removed. Furthermore, zero energy is impossible for sequences of length at least two, since one of the autocorrelations always consists of a single nonzero product. The standard decision problem therefore requires a general energy bound. We also investigate variants in which the input can specify how pair products are grouped, which terms within a correlation are retained, or which complete correlations are included. In these variants, zero energy may become attainable, and the selected terms determine the constraints that a sequence must satisfy. These observations motivate the central question of the paper: how does the complexity of zero-energy feasibility change as input-dependent support patterns are subjected to progressively stronger restrictions on their cardinality and geometry?

\subsection{Related work}\label{sec:related-work}
Prior work relevant to the present study includes models that modify interaction structure or interaction range. Bouchaud and M\'ezard analyze the deterministic Bernasconi model through an auxiliary random-connectivity system to study its glassy behavior \cite{BouchaudMezard}. Liers et al. introduce a non-disordered model that minimizes spin correlations within local blocks of a tunable interaction range $R$ \cite{Liers2009}. For the one-dimensional model, they provide an exact transfer-matrix algorithm with running time $O(N2^{2(R-3)})$, which is linear in $N$ for fixed $R$. These approaches differ from Masked-LABS, where the input explicitly specifies which pair interactions contribute to each lag correlation. Our analysis instead addresses the worst-case complexity of zero-energy feasibility under restrictions on support cardinality and geometry.

Another line of research concerns the selection and evaluation of autocorrelation terms. Song, Babu, and Palomar study weighted integrated sidelobe-level objectives, assigning weights to complete aperiodic autocorrelations at individual lags \cite{SongBabuPalomar}. Choosing binary weights provides a direct comparison with complete-lag selection in terms of which lag correlations are retained, although their work addresses sequence optimization rather than the corresponding zero-energy decision problem. Klapper and Goresky analyze autocorrelation sums of periodic binary sequences over finite windows at specified shifts \cite{KlapperGoresky}. These windowed correlations are structurally related to the contiguous supports of single-window Masked-LABS, but differ in their periodic setting and mathematical objectives.

Other relevant research addresses the representation and computational treatment of LABS. Walter studies binary linearization complexity and evaluates corresponding integer linear programming formulations for LABS \cite{Walter2024}. Computational approaches include QAOA \cite{ShaydulinEtAl2024}, quantum imaginary-time evolution applied to a PUBO formulation \cite{BauerAlamOstrowskiSiopsis2024}, and comparisons between direct PUBO and QUBO formulations \cite{IkeuchiMatsudaTanaka2026}. General Ising/QUBO modeling and quadratization methods provide the relevant representation background \cite{Lucas2014,GloverKochenbergerDu2019,AnthonyBorosCramaGruber2016,DragoiBaiardiEgger2026}. These studies address representation and computational methods rather than Karp decision complexity.

\subsection{Contributions and organization}
In this paper, we first formalize unary standard LABS and derive the sparsity/Mahaney barrier for Karp NP-hardness. We then study input-dependent support under progressively stronger geometric restrictions.

For the Pair-Set-LABS and Masked-LABS variants, zero-energy feasibility is polynomial-time decidable when every pair group or masked row has at most three terms and NP-complete when four terms are allowed. For the Masked-LABS variant, the NP-hardness construction respects the lag-diagonal restriction. Requiring each active support to be one contiguous interval preserves the corresponding classification: windows of length at most three are polynomial-time decidable, while allowing length four gives NP-completeness.

When every active lag retains its full LABS correlation, odd sequence lengths admit an exact parity classification. For even sequence length, any active odd lag makes zero energy impossible, while the remaining case with only even active lags is exactly equivalent to the selected-lag complementary-pair feasibility problem ($\SLCP$), defined in Section~\ref{sec:complete-lag}. Its general complexity remains open. Table~\ref{tab:complexity-landscape} summarizes the main complexity results across the support geometries considered in this paper.

\begin{table}[t]
\caption{Complexity landscape across the support geometries considered in this paper.}
\label{tab:complexity-landscape}
\centering
\small
\setlength{\tabcolsep}{3pt}
\begin{tabular}{@{}L{0.18\textwidth}L{0.25\textwidth}L{0.22\textwidth}L{0.26\textwidth}@{}}
\hline
Formulation & Geometry & Result & Limit \\
\hline

Standard LABS
& Support fixed by $N$
& In $\NP$
& Karp NP-hardness would imply $\Ptime=\NP$. \\

Pair-Set-LABS
& Arbitrary pair groups
& At most $3$ terms: $\Ptime$
& Allowing $4$ terms: $\NP$-complete. \\

Masked-LABS
& One lag diagonal per row
& At most $3$ terms: $\Ptime$
& Allowing $4$ terms: $\NP$-complete. \\

Single-window
& One contiguous interval per active lag
& Length at most $3$: $\Ptime$
& Allowing length $4$: $\NP$-complete. \\

Complete-lag
& Full correlation at each active lag
& Odd $N$: exact parity classification
& Even $N$: odd active lags are infeasible; otherwise equivalent to $\SLCP$. \\
\hline
\end{tabular}
\end{table}

The organization of this paper is as follows. Section~\ref{sec:preliminaries} fixes notation, the objective, and the encoding and complexity conventions. Section~\ref{sec:fixed-baseline} treats standard LABS. Section~\ref{sec:local-support} studies pair groups and masked lag-diagonal supports, and Section~\ref{sec:contiguous} imposes the additional requirement of contiguous support. Section~\ref{sec:complete-lag} analyzes the case in which every active lag retains its full correlation. Section~\ref{sec:discussion} summarizes the conclusions.

\section{Preliminaries}\label{sec:preliminaries}

Let $\Bset:=\{0,1\}$ be the set of Boolean values, and let $\Sset:=\{-1,+1\}$ be the set of spins. We write
\begin{equation}
\Ftwo:=\mathbb Z/2\mathbb Z
\end{equation}
for the field with two elements; its addition and multiplication are taken modulo $2$. Unless stated otherwise, sequence and vertex indices are positive integers.

For integers $i,j\in\mathbb Z$, define
\begin{equation}
\dbn{i}{j}:=\{r\in\mathbb Z:i\le r\le j\}.
\end{equation}
Thus $\dbn{i}{j}=\varnothing$ when $i>j$. Whenever a maximum of support cardinalities is taken over an empty index set, we use the convention $\max\varnothing:=0$.

This interval notation is reserved for integer indices and vertex labels. The positions of a sequence of length $N$ are indexed by
\begin{equation}
V_N:=\dbn{1}{N}.
\end{equation}

An edge on a finite vertex set $V$ is written as an unordered pair $\{u,v\}$, where $u,v\in V$ and $u\ne v$. Whenever a coefficient is indexed by the endpoints of an edge $\{i,j\}$, we order the indices so that $i<j$.

For a sequence
\begin{equation}
s=(s_1,\ldots,s_N)\in\Sset^N,
\end{equation}
define the aperiodic autocorrelation at lag $k\in\dbn{1}{N-1}$ by
\begin{equation}\label{eq:std-corr}
C_k(s)=\sum_{i=1}^{N-k}s_is_{i+k},
\end{equation}
and define the LABS energy by
\begin{equation}\label{eq:std-energy}
E_N(s)=\sum_{k=1}^{N-1}C_k(s)^2.
\end{equation}

The optimization problem is to compute a sequence minimizing $E_N(s)$.

For a fixed lag $k$, define the support
\begin{equation}
E_k
:=\bigl\{\{i,i+k\}:i\in\dbn{1}{N-k}\bigr\}.
\end{equation}

Eq.~(\ref{eq:std-corr}) can then be written as
\begin{equation}
C_k(s)=\sum_{\{i,j\}\in E_k}s_is_j.
\end{equation}

Figure~\ref{fig:support-geometries} gives a pair-index-matrix view of standard LABS and of the support geometries considered later in the paper. We call $E_k$ the \emph{lag-$k$ diagonal}, or simply the $k$th diagonal; for fixed $N$, each such support is completely determined.

\begin{figure}[t]
\centering
\includegraphics[width=0.99\linewidth]{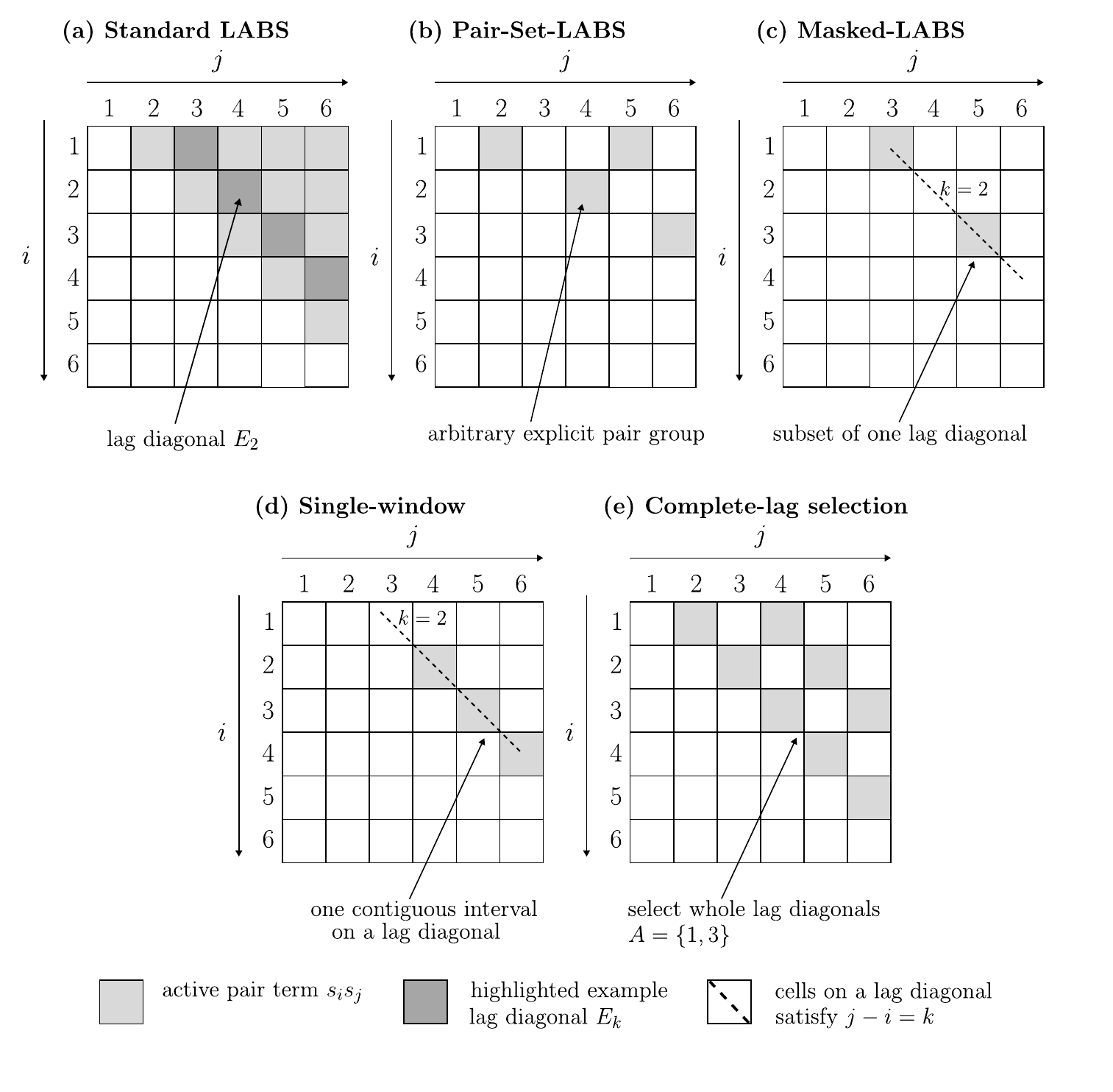}
\caption{Pair-index-matrix view of the support geometries considered in this paper. Each cell $(i,j)$ with $i<j$ represents the pair term $s_i s_j$, and cells on a common superdiagonal satisfy $j-i=k$ and therefore correspond to a fixed lag $k$. (a) Standard LABS contains all such pair terms; one complete lag diagonal is highlighted. (b) A Pair-Set-LABS group may contain arbitrary explicit pairs. (c) A Masked-LABS row may select an arbitrary subset of one lag diagonal. (d) Single-window support further requires the selected subset on each active lag to be one contiguous interval. (e) Complete-lag selection activates or deactivates whole lag diagonals. The $6\times6$ matrices are illustrative.}
\label{fig:support-geometries}
\end{figure}

\subsection{Encoding and complexity}\label{sec:encodings}
Let $\Sigma$ be a fixed finite alphabet containing the symbols $0$ and $1$, and let $\Sigma^*$ denote the set of all finite strings over $\Sigma$.

A decision problem $X$ consists of a set $\mathcal I_X$ of admissible instances and a YES/NO predicate on $\mathcal I_X$. For $I\in\mathcal I_X$, let $\enc_X(I)\in\Sigma^*$ denote a fixed canonical string representation of $I$ that can be parsed in polynomial time. A string is \emph{malformed} if it is not the canonical representation of any instance in $\mathcal I_X$; malformed strings are rejected.

A positive integer $n$ is represented in unary by the string $1^n$. A nonnegative integer $b$ is represented in binary by its shortest base-$2$ representation, with $0$ represented by the single symbol $0$.

The language associated with $X$ is
\begin{equation}
L_X:=\{\enc_X(I):I\in\mathcal I_X\text{ and }I\text{ is a YES-instance}\}.
\end{equation}

A \emph{witness}, or \emph{certificate}, for a YES-instance is auxiliary data that certifies membership in the associated language. A language $L$ belongs to $\Ptime$ if membership in $L$ can be decided in polynomial time. It belongs to $\NP$ if there is a polynomial-time verifier such that every YES-instance $x\in L$ has a certificate of length polynomial in $|x|$ that the verifier accepts, while no certificate is accepted for a NO-instance. Here $|x|$ denotes the length of the string $x$.

A polynomial-time many-one, or Karp, reduction from a language $L$ to a language $L'$ is a polynomial-time computable map $f$ such that
\begin{equation}
x\in L\quad\text{iff}\quad f(x)\in L'.
\end{equation}

A language $L$ is \emph{NP-hard} if $A\le_m^p L$ for every $A\in\NP$. It is \emph{NP-complete} if, in addition, $L\in\NP$. All NP-hardness and NP-completeness statements in this paper use Karp reductions.

An optimization problem $X$ consists of a set $\mathcal I_X$ of admissible instances, a set $\mathcal F_X(I)$ of feasible solutions for each $I\in\mathcal I_X$, and an objective function $E_I:\mathcal F_X(I)\to\mathbb R$ to be minimized.

The associated \emph{threshold decision problem} asks, for an instance $I$ and threshold $B$, whether there exists $z\in\mathcal F_X(I)$ such that $E_I(z)\le B$. The case $B=0$ is called \emph{zero-energy feasibility}.

\section{Standard LABS}\label{sec:fixed-baseline}

An instance of the standard LABS decision problem consists of a sequence length $N$, encoded in unary, and a nonnegative energy threshold $B$, encoded in binary. We denote its canonical representation by $\enc_{\LABS}(N,B)$. The corresponding decision language is
\begin{equation}\label{eq:labs-dec-unary}
L_{\mathrm{LABS}}^{\mathrm{unary}}:=\{\enc_{\LABS}(N,B):N\ge2,\ B\in\mathbb Z_{\ge0},\ \exists s\in\Sset^N,\ E_N(s)\le B\}.
\end{equation}

The exact-search literature does not determine the Karp complexity of this language. Although Kotsireas et al. describe LABS as NP-hard in their introduction \cite{Kotsireas2017}, they do not state an explicit polynomial-time many-one reduction for the standard one-parameter family.

\begin{proposition}\label{prop:labs-in-np}
The language $L_{\mathrm{LABS}}^{\mathrm{unary}}$ belongs to $\NP$.
\end{proposition}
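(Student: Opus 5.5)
We will show membership in $\NP$ directly from the definition. The certificate is the sequence itself, and the verifier recomputes its energy. Because $N$ is given in unary, the certificate and all arithmetic are polynomial in the input length.

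\emph{Step 1: parsing.} The verifier first checks that its input $x$ is the canonical encoding $\enc_{\LABS}(N,B)$ with $N\ge2$ and $B$ a nonnegative integer in shortest binary form; malformed strings are rejected. Since $N$ is written in unary, we have $N\le|x|$. Since $B$ is written in binary, $\lceil\log_2(B+1)\rceil\le|x|$.

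\emph{Step 2: certificate and evaluation.} The certificate is a string $s\in\Sset^N$, encoded for instance as $N$ bits, so its length is $N\le|x|$. The verifier computes each $C_k(s)$ for $k\in\dbn{1}{N-1}$ from Eq.~(\ref{eq:std-corr}). This takes $O(N^2)$ additions of integers bounded in absolute value by $N$.

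It then forms $E_N(s)=\sum_k C_k(s)^2$ from Eq.~(\ref{eq:std-energy}). We have $E_N(s)\le\sum_{k=1}^{N-1}(N-k)^2=O(N^3)$, so every intermediate integer has $O(\log N)$ bits. Finally the verifier compares $E_N(s)$ with $B$ in binary, which takes time polynomial in $\log N+\log(B+1)$. The whole verifier therefore runs in time polynomial in $|x|$.

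\emph{Step 3: correctness.} If $x\in L_{\mathrm{LABS}}^{\mathrm{unary}}$, then some $s\in\Sset^N$ satisfies $E_N(s)\le B$, and that $s$ is an accepted certificate. If $x$ is malformed, it is rejected in Step 1. If $x$ is well formed but a NO-instance, then every $s$ has $E_N(s)>B$, so no certificate is accepted.

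The only point needing care is the size bookkeeping. The certificate must be polynomial in $|x|$, which holds precisely because $N$ is encoded in unary; under a binary encoding of $N$ a length-$N$ witness would be exponential in the input size. The comparison with a binary-encoded $B$ that may be much larger than $N^3$ is harmless, since it is linear in the bit lengths. I expect no genuine obstacle beyond stating these bounds explicitly.
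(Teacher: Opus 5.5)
Your proposal is correct and matches the paper's proof: the spin vector $s$ is the certificate, the $N-1$ autocorrelations are recomputed using $O(N^2)$ pair products and $O(\log N)$-bit arithmetic, and the resulting energy is compared with the binary threshold $B$. Your extra bookkeeping (rejecting malformed strings and using the unary encoding of $N$ to bound the certificate length) only makes explicit what the paper's argument leaves implicit.
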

\begin{proof}
A certificate is the spin vector $s\in\Sset^N$. Given $s$, compute every autocorrelation $C_k(s)$ by evaluating $N-k$ products and additions. There are $\binom{N}{2}=O(N^2)$ pair products in total.

Each $C_k(s)$ has magnitude at most $N-k$, so the arithmetic uses $O(\log N)$-bit integers. Squaring and summing the $N-1$ correlations therefore takes polynomial time in $N$ and in the bit length of $B$. The verifier then compares the computed energy with the threshold $B$, accepting exactly when $E_N(s)\le B$ and rejecting otherwise.
\qed
\end{proof}

\subsection{Normalization and sparsity}

To examine the implications of unary encoding for NP-hardness, we first establish a sparse normalized form of the LABS language. This allows us to apply Mahaney's theorem \cite{Mahaney}. A language $L\subseteq\Sigma^*$ is called \emph{sparse} if there exists a polynomial $p$ such that, for every $\ell\ge0$,
\begin{equation}
\left|\{x\in L:|x|\le\ell\}\right|\le p(\ell).
\end{equation}

Thus sparsity is a statement about the density of YES-strings as a function of encoding length; it does not by itself provide an algorithm for deciding membership. The following lemmas establish the normalization and sparsity properties of the unary LABS language.

\begin{lemma}[Energy bound]\label{labs-lem:energybound}
For every $N\ge1$ and every $s\in\Sset^N$,
\begin{equation}\label{labs-eq:UN}
0\le E_N(s)\le U(N):=\frac{N(N-1)(2N-1)}6.
\end{equation}
\end{lemma}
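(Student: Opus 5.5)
The plan is a direct termwise estimate. Nonnegativity is immediate: $E_N(s)$ is a sum of squares of the integers $C_k(s)$, each of which is nonnegative. For $N=1$ the sum over $k\in\dbn{1}{0}$ is empty, so $E_1(s)=0=U(1)$, and both inequalities hold. From now on I assume $N\ge2$.

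For the upper bound I first bound each correlation separately. By Eq.~(\ref{eq:std-corr}), $C_k(s)$ is a sum of exactly $N-k$ terms $s_is_{i+k}\in\Sset$. The triangle inequality then gives $|C_k(s)|\le N-k$, and hence $C_k(s)^2\le (N-k)^2$. Summing over $k\in\dbn{1}{N-1}$ and substituting $m=N-k$, which runs over $\dbn{1}{N-1}$, I obtain
\begin{equation*}
E_N(s)\le\sum_{k=1}^{N-1}(N-k)^2=\sum_{m=1}^{N-1}m^2.
\end{equation*}

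The remaining step is the classical identity $\sum_{m=1}^{M}m^2=M(M+1)(2M+1)/6$. With $M=N-1$ it gives
\begin{equation*}
\sum_{m=1}^{N-1}m^2=\frac{(N-1)N(2N-1)}{6}=U(N).
\end{equation*}
If a self-contained argument is wanted, the identity follows by induction on $M$, or by telescoping $(m+1)^3-m^3$.

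I do not expect a real obstacle here. The only points needing care are the reindexing $m=N-k$ and checking the boundary case $N=1$. I would also note that the bound is attained by the constant sequence $s=(+1,\ldots,+1)$, since then every $C_k(s)=N-k$; this shows $U(N)$ is the exact maximum. That sharpness is not needed for the lemma, but it is useful for the normalization step that follows, where thresholds $B\ge U(N)$ can be replaced by $U(N)$.
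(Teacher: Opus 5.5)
Your proof is correct and follows the paper's argument: you bound $|C_k(s)|\le N-k$ because $C_k(s)$ is a sum of $N-k$ signs, then sum $(N-k)^2$ with the square-sum identity to get $U(N)$. Your explicit check of $N=1$ and your sharpness remark via the all-ones sequence are harmless additions the paper does not include; in your nonnegativity sentence, though, it is the squares $C_k(s)^2$ that are nonnegative, not the integers $C_k(s)$.
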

\begin{proof}
For lag $k$, the autocorrelation $C_k(s)$ is a sum of $N-k$ signs, hence
$|C_k(s)|\le N-k$. Therefore
\begin{equation}
E_N(s)
=\sum_{k=1}^{N-1}C_k(s)^2
\le\sum_{k=1}^{N-1}(N-k)^2
=\frac{N(N-1)(2N-1)}6
=U(N).
\end{equation}
\qed
\end{proof}

Since $E_N(s)\le U(N)$ for every $s\in\Sset^N$, any well-formed instance with $B\ge U(N)$ is a YES-instance. For the normalization below, thresholds with $B>U(N)$ are therefore redundant, whereas $B=U(N)$ remains admissible.

We define
\begin{equation}
L_{\mathrm{LABS}}^{\mathrm{cap}}:=\{\enc_{\LABS}(N,B):0\le B\le U(N),\ \exists s\in\Sset^N,\ E_N(s)\le B\}.
\end{equation}

The superscript ``cap'' refers to the restriction $B\le U(N)$. Malformed encodings and well-formed instances with $B>U(N)$ are not members of $L_{\mathrm{LABS}}^{\mathrm{cap}}$.

\begin{lemma}[Threshold normalization]\label{labs-lem:normalize}
There is a polynomial-time many-one reduction
\begin{equation}
L_{\mathrm{LABS}}^{\mathrm{unary}}\le_m^p L_{\mathrm{LABS}}^{\mathrm{cap}}.
\end{equation}
\end{lemma}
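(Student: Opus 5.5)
The reduction I have in mind caps the threshold at $U(N)$. Given an input string $x$, the map $f$ first parses $x$. If $x$ is malformed, or encodes $(N,B)$ with $N<2$, then $f$ outputs a fixed string $x_{\mathrm{no}}\notin L_{\mathrm{LABS}}^{\mathrm{cap}}$; any malformed string works, since malformed encodings are rejected by definition. Otherwise $f$ outputs $\enc_{\LABS}(N,\min\{B,U(N)\})$. The unary part $1^N$ is copied unchanged, and only the binary threshold is replaced.

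Correctness splits into two cases. If $B\le U(N)$, the instance is returned unchanged, and the side condition $0\le B\le U(N)$ in the definition of $L_{\mathrm{LABS}}^{\mathrm{cap}}$ holds, so membership in the two languages coincides. If $B>U(N)$, then Lemma~\ref{labs-lem:energybound} shows that every $s\in\Sset^N$ satisfies $E_N(s)\le U(N)<B$, so $x\in L_{\mathrm{LABS}}^{\mathrm{unary}}$. The output $(N,U(N))$ also satisfies the cap, and any $s$ witnesses $E_N(s)\le U(N)$ by the same lemma, so the output lies in $L_{\mathrm{LABS}}^{\mathrm{cap}}$. Rejected inputs map to a non-member, which preserves the equivalence.

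For the running time, $N$ is read off from the unary prefix in time $O(|x|)$, and $U(N)=N(N-1)(2N-1)/6$ has $O(\log N)$ bits and is computed by schoolbook arithmetic. Comparing $U(N)$ with $B$ takes time linear in their bit lengths. The output length is at most $N+O(\log N)+O(1)$ symbols, which is polynomial in $|x|$.

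No step is a real obstacle. The only point needing care is the encoding convention: the reduction must not lengthen or distort the unary part, and malformed or $N<2$ inputs must be sent to a fixed non-member so the reduction stays total.
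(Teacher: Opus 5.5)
Your proof is correct and follows the paper's argument: the same case split on $B\le U(N)$ versus $B>U(N)$, with Lemma~\ref{labs-lem:energybound} certifying that the overflow case is a YES-instance. There are two small differences. In the overflow case you output $\enc_{\LABS}(N,U(N))$, whereas the paper outputs the fixed YES-instance $\enc_{\LABS}(2,1)$; this change is harmless. You also explicitly send $N<2$ inputs to a non-member, which is slightly more careful than the paper: if $N=1$ counts as well-formed, then $\enc_{\LABS}(1,0)$ lies in $L_{\mathrm{LABS}}^{\mathrm{cap}}$ as written but not in $L_{\mathrm{LABS}}^{\mathrm{unary}}$, and the paper's proof leaves this case implicit.
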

\begin{proof}
On a well-formed input $x=\enc_{\LABS}(N,B)$, compute $U(N)$ using Eq.~(\ref{labs-eq:UN}). If $B\le U(N)$, output $x$ unchanged. If $B>U(N)$, output the fixed YES-instance $y_{\rm yes}:=\enc_{\LABS}(2,1)$. Since every length-two sequence has energy one, $y_{\rm yes}\in L_{\mathrm{LABS}}^{\mathrm{cap}}$. By Lemma~\ref{labs-lem:energybound}, every well-formed input with $B>U(N)$ belongs to $L_{\mathrm{LABS}}^{\mathrm{unary}}$, so the map preserves membership. Malformed inputs may be sent to the fixed NO-instance $\enc_{\LABS}(2,0)$. The normalization is polynomial-time because $N$ is explicitly unary and $U(N)$ has $O(\log N)$ bits.
\qed
\end{proof}

\begin{lemma}[Sparsity]\label{labs-lem:sparse}
The normalized language $L_{\mathrm{LABS}}^{\mathrm{cap}}$ is sparse. More precisely, the number of strings in $L_{\mathrm{LABS}}^{\mathrm{cap}}$ of length at most $\ell$ is $O(\ell^4)$ under any fixed canonical self-delimiting encoding.
\end{lemma}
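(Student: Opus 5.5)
The plan is to count the members of $L_{\mathrm{LABS}}^{\mathrm{cap}}$ directly, using the fact that every member is the canonical encoding of a pair $(N,B)$ with $N\ge 2$ and $0\le B\le U(N)$. Because the encoding is fixed and injective, two distinct members are encodings of distinct pairs. Hence the number of members of length at most $\ell$ is at most the number of admissible pairs $(N,B)$ with $B\le U(N)$ whose encoding has length at most $\ell$. Malformed strings and pairs with $B>U(N)$ are excluded by definition, so they contribute nothing to this count.

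First, I bound $N$ in terms of $\ell$. Since $N$ is written in unary as $1^N$, and this is a substring of $\enc_{\LABS}(N,B)$ under any self-delimiting canonical encoding, we have $N\le|\enc_{\LABS}(N,B)|\le \ell$. Only the values $N\in\dbn{2}{\ell}$ are therefore possible.

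Second, for each such $N$, the cap restricts $B$ to $\dbn{0}{U(N)}$. That gives at most $U(N)+1$ choices. By Lemma~\ref{labs-lem:energybound}, $U(N)=N(N-1)(2N-1)/6\le N^3/3$, so each $N$ allows at most $N^3/3+1\le \ell^3$ thresholds (for $\ell\ge 2$). Summing over $N\le \ell$ gives at most
\[
\sum_{N=2}^{\ell}\bigl(U(N)+1\bigr)\le \ell\cdot \ell^3=\ell^4 .
\]
For small $\ell$ the count is a constant, so the count is $O(\ell^4)$ and $p(\ell)=\ell^4+c$ witnesses sparsity.

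The only delicate step is justifying $N\le \ell$ uniformly over encodings. I would argue it as follows: a canonical self-delimiting encoding of $(N,B)$ must contain enough information to recover $N$ written in unary, so its length is at least $N$ up to an additive constant. With any affine bound $|\enc_{\LABS}(N,B)|\ge N/c'-c$, the same counting still gives $O(\ell^4)$.

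Notably, the bound uses neither the binary length of $B$ nor which thresholds are actually feasible. The cap alone makes the YES-strings polynomially sparse. Without the cap, arbitrarily large $B$ would yield exponentially many YES-strings of a given length, which is why Lemma~\ref{labs-lem:normalize} is needed first.
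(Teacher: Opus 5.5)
Your proposal is correct and follows essentially the same argument as the paper: bound $N\le\ell$ using the unary block $1^N$, allow at most $U(N)+1=O(N^3)$ capped thresholds for each $N$, and sum over $N\le\ell$ to get $O(\ell^4)$. One small point: the definition of $L_{\mathrm{LABS}}^{\mathrm{cap}}$ does not itself impose $N\ge2$ (the paper's sum starts at $N=1$), but admitting $N=1$ adds at most one string and does not affect the bound.
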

\begin{proof}
Any normalized string of total length at most $\ell$ has $N\le\ell$ because it contains $1^N$. For a fixed $N$, the allowed thresholds satisfy $0\le B\le U(N)=O(N^3)$, so there are at most $U(N)+1=O(N^3)$ corresponding encoded instances. Therefore
\begin{equation}
\left|\{x\in L_{\mathrm{LABS}}^{\mathrm{cap}}:|x|\le\ell\}\right|
\le
\sum_{N=1}^{\ell}(U(N)+1)
=O(\ell^4).
\end{equation}
\qed
\end{proof}

\subsection{Karp-hardness barrier}\label{subsec:karp-barrier}
Mahaney's theorem states that if a sparse language is NP-hard under polynomial-time many-one reductions, then $\Ptime=\NP$ \cite{Mahaney}. Equivalently, assuming $\Ptime\ne\NP$, no sparse language in $\NP$ can be NP-hard under Karp reductions. Modern expositions often state this form directly \cite{Grochow}.

\begin{theorem}[Unary LABS hardness barrier]\label{thm:mahaney-labs}
If $L_{\mathrm{LABS}}^{\mathrm{unary}}$ is NP-hard under polynomial-time many-one reductions, then $\Ptime=\NP$.

Equivalently, if $\Ptime\ne\NP$, then $L_{\mathrm{LABS}}^{\mathrm{unary}}$ is not NP-hard under polynomial-time many-one reductions.
\end{theorem}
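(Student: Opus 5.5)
The plan is to transfer NP-hardness from $L_{\mathrm{LABS}}^{\mathrm{unary}}$ to the sparse language $L_{\mathrm{LABS}}^{\mathrm{cap}}$ and then invoke Mahaney's theorem. Assume $L_{\mathrm{LABS}}^{\mathrm{unary}}$ is NP-hard under Karp reductions. Then for every $A\in\NP$ there is a polynomial-time map $g$ with $A\le_m^p L_{\mathrm{LABS}}^{\mathrm{unary}}$ via $g$. Let $f$ be the normalization map of Lemma~\ref{labs-lem:normalize}. The composition $f\circ g$ is polynomial-time computable, because $|g(x)|$ is polynomially bounded in $|x|$ and $f$ runs in time polynomial in its input length. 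It also preserves membership, since both maps do. Hence $A\le_m^p L_{\mathrm{LABS}}^{\mathrm{cap}}$ for every $A\in\NP$, so $L_{\mathrm{LABS}}^{\mathrm{cap}}$ is NP-hard under Karp reductions.

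Next, Lemma~\ref{labs-lem:sparse} shows that $L_{\mathrm{LABS}}^{\mathrm{cap}}$ is sparse. For completeness I would note that $L_{\mathrm{LABS}}^{\mathrm{cap}}\in\NP$. The verifier first checks well-formedness and the condition $B\le U(N)$ in polynomial time, and then runs the verifier from Proposition~\ref{prop:labs-in-np}. Mahaney's theorem only needs sparsity and NP-hardness, so this remark is not essential. Applying Mahaney's theorem to the sparse NP-hard language $L_{\mathrm{LABS}}^{\mathrm{cap}}$ gives $\Ptime=\NP$. The second formulation of the theorem is the contrapositive of the first.

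The main point needing care is the normalization step. Under unary encoding of $N$ the original language is not obviously sparse, because $B$ is written in binary and may be arbitrarily large, which gives exponentially many YES-strings for each length. This is exactly why the argument goes through $L_{\mathrm{LABS}}^{\mathrm{cap}}$ rather than applying Mahaney directly. I would check two things here. First, the reduction of Lemma~\ref{labs-lem:normalize} handles malformed strings consistently, by sending them to a fixed NO-instance. Second, the reduction is genuinely many-one and polynomial-time. The second point holds because comparing $B$ with $U(N)$ takes time linear in the bit length of $B$ plus $O(\log N)$.
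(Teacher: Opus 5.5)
Your proposal is correct and follows the paper's proof essentially step for step. You compose an arbitrary Karp reduction to $L_{\mathrm{LABS}}^{\mathrm{unary}}$ with the normalization of Lemma~\ref{labs-lem:normalize} to make $L_{\mathrm{LABS}}^{\mathrm{cap}}$ NP-hard, then combine the sparsity of Lemma~\ref{labs-lem:sparse} with Mahaney's theorem. The only difference is that the paper also records $L_{\mathrm{LABS}}^{\mathrm{cap}}\in\NP$ to call it NP-complete, which, as you observe, is unnecessary for the NP-hard form of Mahaney's theorem that the paper itself quotes.
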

\begin{proof}
Assume $L_{\mathrm{LABS}}^{\mathrm{unary}}$ is NP-hard under polynomial-time many-one reductions. By Lemma~\ref{labs-lem:normalize}, every language in $\NP$ reduces to $L_{\mathrm{LABS}}^{\mathrm{cap}}$ by composing its reduction to $L_{\mathrm{LABS}}^{\mathrm{unary}}$ with the reduction of Lemma~\ref{labs-lem:normalize}. Hence $L_{\mathrm{LABS}}^{\mathrm{cap}}$ is NP-hard.

Moreover, $L_{\mathrm{LABS}}^{\mathrm{cap}}\in\NP$: the verifier of Proposition~\ref{prop:labs-in-np} applies unchanged, with the additional condition $B\le U(N)$ checked in polynomial time. Therefore $L_{\mathrm{LABS}}^{\mathrm{cap}}$ is NP-complete.

By Lemma~\ref{labs-lem:sparse}, $L_{\mathrm{LABS}}^{\mathrm{cap}}$ is sparse, so Mahaney's theorem yields $\Ptime=\NP$ under the assumed NP-hardness.
\qed
\end{proof}

Theorem~\ref{thm:mahaney-labs} applies to standard LABS with unary encoding and polynomial-time many-one reductions; it establishes neither a polynomial-time algorithm nor a corresponding hardness barrier for other encodings or reduction notions. Furthermore, NP-completeness of the input-dependent support variants does not imply NP-hardness of standard LABS without an explicit reduction.

\section{Support cardinality}\label{sec:local-support}

%We now study how local modifications of the fixed support geometry of standard LABS affect zero-energy complexity.

\subsection{Pair-Set-LABS}
Pair-Set-LABS allows arbitrary explicit groups of pair interactions. This unrestricted pair-support geometry is illustrated in Fig.~\ref{fig:support-geometries}(b). Let $V_n:=\dbn{1}{n}$ be the vertex set. A structural instance is an ordered family
\begin{equation}
\mathcal E=(E_1,\ldots,E_m),
\qquad
E_a\subseteq\bigl\{\{u,v\}:u,v\in V_n,\ u\ne v\bigr\}.
\end{equation}

Its canonical threshold-instance encoding is
\begin{equation}\label{eq:pair-encoding}
\enc_{\mathrm{PairSet}}(n,\mathcal E,B),
\end{equation}
where $n$ is represented in unary, every $E_a$ is stored as a canonical explicit edge list, and $B$ is represented in binary. Writing $L(\mathcal E):=\sum_a|E_a|$, its bit length is $O\!\left(n+m+L(\mathcal E)\log(n+1)+\log(B+2)\right)$.

For $s\in\Sset^n$ and a group index $a\in\dbn{1}{m}$, define
\begin{equation}
C_a^{\mathcal E}(s)
:=
\sum_{\{i,j\}\in E_a}s_is_j.
\end{equation}

The total energy is
\begin{equation}
E_{\mathcal E}(s)
:=
\sum_{a=1}^{m}\left(C_a^{\mathcal E}(s)\right)^2.
\end{equation}

The optimization problem minimizes $E_{\mathcal E}(s)$ over $s\in\Sset^n$. Under the encoding in Eq.~(\ref{eq:pair-encoding}), the corresponding threshold decision language is
\begin{equation}
L_{\mathrm{PairSet}}:=\{\enc_{\mathrm{PairSet}}(n,\mathcal E,B):\exists s\in\Sset^n,\ E_{\mathcal E}(s)\le B\}.
\end{equation}

Define the maximum group cardinality
\begin{equation}
\rho(\mathcal E):=\max_{a\in\dbn{1}{m}}|E_a|.
\end{equation}

Because the energy is a sum of squares, $E_{\mathcal E}(s)=0$ if and only if $C_a^{\mathcal E}(s)=0$ for every $a\in\dbn{1}{m}$.

\begin{proposition}\label{prop:pair-np}
The language $L_{\mathrm{PairSet}}$ belongs to $\NP$.
\end{proposition}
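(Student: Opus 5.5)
The plan is to follow the verifier argument of Proposition~\ref{prop:labs-in-np}, with the length measures now taken from the Pair-Set encoding. The certificate for a YES-instance $\enc_{\mathrm{PairSet}}(n,\mathcal E,B)$ will be a spin vector $s\in\Sset^n$, stored as $n$ bits. Since $n$ is written in unary inside the input, the certificate length is at most the input length. The verifier first parses the input in polynomial time and rejects malformed strings, as required by the conventions of Section~\ref{sec:encodings}. It then computes each group correlation $C_a^{\mathcal E}(s)$ by running through the explicit edge list of $E_a$ and adding the product $s_is_j$ for every listed pair $\{i,j\}$.

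For the cost, each $C_a^{\mathcal E}(s)$ is a sum of $|E_a|$ signs. Hence $|C_a^{\mathcal E}(s)|\le |E_a|\le L(\mathcal E)$, and every intermediate value fits in $O(\log(L(\mathcal E)+1))$ bits. The energy satisfies
\begin{equation}
E_{\mathcal E}(s)\le\sum_{a=1}^m|E_a|^2\le L(\mathcal E)^2,
\end{equation}
so it also has $O(\log(L(\mathcal E)+2))$ bits. Computing all correlations takes $O(L(\mathcal E))$ arithmetic operations on such integers. There are then $m$ squarings and $m$ additions, followed by a single comparison with the binary threshold $B$, which costs $O(\log(B+2)+\log(L(\mathcal E)+2))$ bit operations. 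All of these bounds are polynomial in the bit length $O\bigl(n+m+L(\mathcal E)\log(n+1)+\log(B+2)\bigr)$ recorded after Eq.~(\ref{eq:pair-encoding}). The verifier accepts exactly when $E_{\mathcal E}(s)\le B$.

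Correctness is immediate from the definition of $L_{\mathrm{PairSet}}$. A YES-instance has some $s$ with $E_{\mathcal E}(s)\le B$, and the verifier accepts that $s$. For a NO-instance, every $s$ has energy exceeding $B$, so no certificate is accepted. The argument has no real obstacle. The one point needing care is that $B$ may be exponentially large compared with the number of groups. This causes no trouble, because the comparison costs only time linear in the bit lengths of $B$ and of the energy, and $m$ together with the explicit edge lists is part of the input length.
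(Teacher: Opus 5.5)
Your proposal is correct and follows essentially the same verifier argument as the paper: the certificate is $s\in\Sset^n$, of polynomial length because $n$ is given in unary, and the $O(L(\mathcal E))$ products, the squarings and sums of small integers, and the comparison with the binary threshold $B$ all run in polynomial time. Your explicit bound $E_{\mathcal E}(s)\le L(\mathcal E)^2$ is a harmless refinement of the paper's observation that $|C_a^{\mathcal E}(s)|\le|E_a|$.
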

\begin{proof}
A certificate is $s\in\Sset^n$.  Its length is $n$, hence polynomial because the input contains $1^n$.  For each edge $\{i,j\}\in E_a$, compute the product $s_is_j$ and add it to the corresponding group correlation.  This requires $O(L(\mathcal E))$ products and additions.  Each $|C_a^{\mathcal E}(s)|\le |E_a|$, so the integer arithmetic has polynomial bit length.  Squaring the correlations, summing them, and comparing the result with the binary threshold $B$ are therefore polynomial-time operations in the bit length of the input.
\qed
\end{proof}

\subsection{Pair-Set cardinality}

To analyze zero-energy feasibility, we first examine how group cardinality affects the correlation constraints. Every group correlation must vanish; odd-cardinality groups cannot satisfy this condition by parity, while two-pair groups yield affine constraints over $\Ftwo$.

\begin{lemma}[Odd-cardinality]\label{lem:pair-odd}
If $|E_a|$ is odd, then $C_a^{\mathcal E}(s)\ne0$ for every assignment $s\in\Sset^n$.
\end{lemma}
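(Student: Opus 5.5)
The plan is a parity argument: a sum of an odd number of $\pm1$ terms is an odd integer, and so it cannot vanish.

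First, fix an arbitrary $s\in\Sset^n$ and a group index $a$ with $|E_a|=t$ odd. Each summand $s_is_j$ with $\{i,j\}\in E_a$ lies in $\Sset=\{-1,+1\}$, because it is a product of two spins. Let $p$ be the number of summands equal to $+1$ and $q=t-p$ the number equal to $-1$. Then
\begin{equation}
C_a^{\mathcal E}(s)=p-q=2p-t .
\end{equation}

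Second, reduce modulo $2$. Since $2p-t\equiv t\equiv1\pmod 2$, the value $C_a^{\mathcal E}(s)$ is an odd integer and is therefore nonzero. Equivalently, each summand satisfies $s_is_j\equiv1\pmod 2$, so the sum of $t$ summands is congruent to $t\equiv1$. Because $s$ was arbitrary, the claim holds for every assignment.

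No step is a real obstacle. The only point that needs care is that the edges of $E_a$ are distinct, so $|E_a|$ counts exactly the number of summands. Repeated pairs or pairs with $u=v$ cannot change the count, since $E_a$ is a set of edges with $u\ne v$ by definition. One can also note the immediate consequence that any instance containing an odd-cardinality group has no zero-energy assignment, which is how the lemma will presumably be used in the cardinality-three tractability result.
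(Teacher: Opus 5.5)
Your proof is correct and follows the paper's own argument: each summand $s_is_j$ lies in $\Sset$, so a sum of an odd number of them is an odd integer and therefore nonzero. Writing $C_a^{\mathcal E}(s)=2p-t$ only spells out the same parity observation in more detail.
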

\begin{proof}
Every product $s_is_j$ belongs to $\Sset$. A sum of an odd number of odd integers is odd, so $C_a^{\mathcal E}(s)$ is an odd integer and cannot equal zero.
\qed
\end{proof}

Consequently, under the bound $\rho(\mathcal E)\le3$, groups of cardinality one and three are excluded from every zero-energy YES-instance. Only cardinality-two groups require further analysis.

\begin{lemma}[Two-pair parity]\label{lem:pair-two}
Let $E_a=\{\{i,j\},\{k,\ell\}\}$. Under the Boolean encoding $s_v=(-1)^{x_v}$ with $x_v\in\Ftwo$, the equation $C_a^{\mathcal E}(s)=0$ is equivalent to $x_i+x_j+x_k+x_\ell\equiv1\pmod2$, with any repeated vertex cancelling modulo two.
\end{lemma}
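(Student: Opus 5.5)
The plan is to reduce the integer condition $C_a^{\mathcal E}(s)=0$ to a statement about the product of the two spin terms, and then translate that product into $\Ftwo$ via the encoding $s_v=(-1)^{x_v}$.

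\emph{Step 1: reduce to a sign condition.} Since $E_a$ has exactly two pairs, $C_a^{\mathcal E}(s)=p+q$ with $p:=s_is_j\in\Sset$ and $q:=s_ks_\ell\in\Sset$. The sum $p+q$ lies in $\{-2,0,2\}$, and it vanishes exactly when $p=-q$. Because $p,q\in\Sset$, this is equivalent to $pq=-1$, that is,
\begin{equation}
s_is_js_ks_\ell=-1 .
\end{equation}

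\emph{Step 2: translate to $\Ftwo$.} Substituting $s_v=(-1)^{x_v}$ gives $s_is_js_ks_\ell=(-1)^{x_i+x_j+x_k+x_\ell}$, where the exponent is computed in $\mathbb Z$. Since $(-1)^t$ depends only on $t \bmod 2$ and equals $-1$ exactly when $t$ is odd, the condition becomes $x_i+x_j+x_k+x_\ell\equiv1\pmod 2$. The encoding $v\mapsto s_v$ is a bijection between $\Ftwo^n$ and $\Sset^n$, so the equivalence holds assignment by assignment.

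\emph{Step 3: repeated vertices.} The two pairs are distinct edges, but they may share one vertex, for example $j=k$. In that case the product contains $s_j^2=1$, and correspondingly $2x_j\equiv0$ in $\Ftwo$. The constraint then reduces to $x_i+x_\ell\equiv1$. I would note that the formula remains literally correct when read in $\Ftwo$, with the repeated variable cancelling.

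I do not expect a real obstacle. The only point requiring care is that the exponent identity is applied to the integer sum before reducing modulo $2$, which is what justifies treating repeated vertices uniformly.
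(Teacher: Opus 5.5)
Your proof is correct and follows the paper's argument: rewrite $s_is_j+s_ks_\ell=0$ as $s_is_js_ks_\ell=-1$, substitute $s_v=(-1)^{x_v}$ to obtain the parity condition, and note that a shared vertex contributes $2x_v\equiv 0$. Your extra observations, that $p+q\in\{-2,0,2\}$ and that two distinct edges share at most one vertex, are minor refinements of the same argument.
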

\begin{proof}
The zero equation $s_is_j+s_ks_\ell=0$ is equivalent to $s_is_js_ks_\ell=-1$. Substituting $s_v=(-1)^{x_v}$ gives $(-1)^{x_i+x_j+x_k+x_\ell}=-1$, equivalently $x_i+x_j+x_k+x_\ell\equiv1\pmod2$. If a vertex occurs twice among the two pairs, its exponent occurs twice and therefore cancels modulo two in $\Ftwo$.
\qed
\end{proof}

\begin{proposition}[Tractability at cardinality three]\label{prop:pair-three}
The zero-energy decision version of Pair-Set-LABS is polynomial-time decidable on instances satisfying $\rho(\mathcal E)\le3$.
\end{proposition}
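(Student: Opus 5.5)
The plan is to reduce zero-energy feasibility under $\rho(\mathcal E)\le3$ to solving a linear system over $\Ftwo$, which Gaussian elimination handles in polynomial time. The algorithm first parses $\enc_{\mathrm{PairSet}}(n,\mathcal E,0)$ and scans the groups. Empty groups ($|E_a|=0$) impose no constraint, since $C_a^{\mathcal E}(s)=0$ identically, and they are discarded.

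If any group has $|E_a|\in\{1,3\}$, the algorithm rejects. This is justified by Lemma~\ref{lem:pair-odd}: such a group satisfies $C_a^{\mathcal E}(s)\ne0$ for every $s$. Since the energy is a sum of squares, $E_{\mathcal E}(s)=0$ forces every $C_a^{\mathcal E}(s)=0$, so no zero-energy assignment exists.

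Otherwise every remaining group has exactly two pairs $\{i,j\},\{k,\ell\}$. By Lemma~\ref{lem:pair-two}, under $s_v=(-1)^{x_v}$ the constraint $C_a^{\mathcal E}(s)=0$ is equivalent to the affine equation $x_i+x_j+x_k+x_\ell\equiv1\pmod2$, with repeated vertices cancelling. A repeated vertex can occur here: the two pairs are distinct edges but may share a vertex, giving a three-variable equation. The two-variable case cannot arise, because two distinct edges share at most one vertex.

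Since the substitution is a bijection between $\Sset^n$ and $\Ftwo^n$, the instance is a YES-instance exactly when the resulting system $Ax=\mathbf 1$ is consistent. Here $A$ has at most $m$ rows and $n$ columns. Gaussian elimination over $\Ftwo$ decides consistency in $O(mn\min(m,n))$ bit operations, which is polynomial in the input length because $n$ is given in unary and each group is explicitly listed. If the system is consistent, a solution $x$ also yields an explicit witness $s$.

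The main point needing care is not computational but logical: the case split must be exhaustive and equivalence-preserving. That means checking cardinalities $0$ through $3$, and noting that zero-energy feasibility is exactly the conjunction of the group constraints. I expect no real obstacle beyond correctly handling cancellation in the parity equation; in particular, a group whose equation reduces to $0\equiv1$ would simply make the system inconsistent, and Gaussian elimination detects this automatically.
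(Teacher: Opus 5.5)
Your proposal is correct and follows the paper's proof essentially verbatim: reject if any group has cardinality one or three (Lemma~\ref{lem:pair-odd}), discard empty groups, turn each two-edge group into an affine $\Ftwo$ equation (Lemma~\ref{lem:pair-two}), and test consistency by Gaussian elimination. One small slip does not affect the argument: when the two edges share a vertex, that vertex cancels and the equation has two variables, $x_i+x_\ell\equiv1$, not three; the case that cannot arise is the fully cancelled equation $0\equiv1$, so your closing remark about it is vacuous.
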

\begin{proof}
If any nonempty group has cardinality one or three, Lemma~\ref{lem:pair-odd} gives an immediate NO answer. After removing empty groups, every remaining group must therefore have cardinality two. Lemma~\ref{lem:pair-two} converts every remaining zero equation into one affine linear equation over $\Ftwo$. The instance is feasible exactly when the resulting affine system is consistent; Gaussian elimination on its augmented matrix over $\Ftwo$ decides this in polynomial time.
\qed
\end{proof}

At cardinality four, a group can encode an exact-one constraint on three logical variables; a reduction from Distinct Positive 1-in-3-SAT from Appendix~\ref{app:distinct-p13} then yields NP-completeness.

\begin{theorem}[Cardinality-four NP-completeness]\label{thm:pair-threshold}
Under the encoding in Eq.~(\ref{eq:pair-encoding}), zero-energy feasibility for Pair-Set-LABS restricted to $\rho(\mathcal E)\le4$ is NP-complete.
\end{theorem}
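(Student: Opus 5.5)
Membership in $\NP$ follows from Proposition~\ref{prop:pair-np}: zero-energy feasibility is the threshold problem with $B=0$, and the cardinality bound $\rho(\mathcal E)\le4$ can be checked in polynomial time from the explicit edge lists. For hardness I would reduce from Distinct Positive 1-in-3-SAT, as the remark before the theorem indicates. In that problem every clause consists of three distinct unnegated variables, and exactly one of them must be true.

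The construction introduces one reference vertex $r$ and one vertex $v_y$ for each logical variable $y$, so $n$ equals the number of variables plus one. I interpret $y$ as true iff $s_{v_y}=-s_r$. For a clause $(y_1,y_2,y_3)$ with distinct variables, I add the four-pair group
\begin{equation}
E_a=\bigl\{\{r,v_{y_1}\},\{r,v_{y_2}\},\{r,v_{y_3}\},\{v_{y_1},v_{y_2}\}\bigr\}.
\end{equation}
Writing $t_i:=s_rs_{v_{y_i}}\in\Sset$, the product $s_{v_{y_1}}s_{v_{y_2}}$ equals $t_1t_2$, so
\begin{equation}
C_a^{\mathcal E}(s)=t_1+t_2+t_3+t_1t_2.
\end{equation}
I then check the eight sign patterns.
\begin{itemize}
\item If exactly one $t_i=-1$, the value is $1+1-1-1=0$.
\item If all $t_i=+1$, the value is $4$.
\item If two are $-1$: $(t_1,t_2)=(-1,-1)$ gives $-2+1+1=0$, but $t_1=-1,t_3=-1$ gives $-1+1-1-1=-2$.
\end{itemize}

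This particular gadget therefore fails: one two-true pattern is also accepted. I would fix it by choosing the group so that only exactly-one patterns vanish. A better option is the symmetric form $t_1+t_2+t_3+\sigma$ with a fourth term $\sigma\equiv+1$, since $t_1+t_2+t_3=-1$ holds iff exactly one $t_i=-1$.

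To obtain the constant $+1$ I need a pair product that is forced to equal $+1$. The plan is to add a second reference vertex $r'$ together with constraints forcing $s_rs_{r'}=+1$. Two-pair groups alone express only the parity relation of Lemma~\ref{lem:pair-two} and cannot pin a value. However, a two-pair group $\{\{r,r'\},\{p,q\}\}$ with an auxiliary product $s_ps_q$ forces $s_rs_{r'}=-s_ps_q$, which is again relative. The simplest route is to use a fresh auxiliary vertex $w_a$ for each clause and write the group as $\{\{r,v_{y_1}\},\{r,v_{y_2}\},\{r,v_{y_3}\},\{r,w_a\}\}$. I then tie $w_a$ to $r$ through a chain with $s_rs_{w_a}=+1$ forced. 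To force that, I would use two-pair groups $\{\{r,w_a\},\{r,z\}\}$ and $\{\{r,z\},\{w_a,z'\}\}$, or I would accept that only sign-relative constraints are available and instead globally fix one distinguished literal pattern. I expect this step, forcing the fourth term to act as the constant $+1$, to be the main obstacle. I would verify it by a short case table.

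Correctness then has two directions. Given a satisfying assignment, set $s_r=+1$, $s_{v_y}=-1$ exactly for true variables, and set auxiliaries to $+1$; every group correlation then vanishes. Conversely, given any zero-energy $s$, the global symmetry $s\mapsto -s$ preserves all pair products. So I may normalize $s_r=+1$, read off the truth values, and each clause group forces exactly one true variable. The instance has size linear in the formula, so the reduction is polynomial-time and every group has cardinality at most four. Together with the membership argument, this gives NP-completeness.
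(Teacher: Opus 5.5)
You chose the right source problem and the right target shape $t_1+t_2+t_3+\sigma$. The gap is at the step you flag yourself, and it has two parts.

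\textbf{The forcing gadget does not force $\sigma_a=s_rs_{w_a}$.} The group $\{\{r,w_a\},\{r,z\}\}$ gives $s_r(s_{w_a}+s_z)=0$, i.e.\ $s_z=-s_{w_a}$. The group $\{\{r,z\},\{w_a,z'\}\}$ then gives $s_rs_zs_{w_a}s_{z'}=-1$, i.e.\ $s_{z'}=s_r$. So $s_{w_a}$ stays free relative to $s_r$. With fresh $z,z'$ per clause, each $\sigma_a$ is free. The clause group then vanishes on both exactly-one and exactly-two patterns, which is a not-all-equal constraint, so soundness for 1-in-3 fails. Your completeness assignment also fails: with all auxiliaries set to $+1$, the group $\{\{r,w_a\},\{r,z\}\}$ has correlation $2$.

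\textbf{Even with $\sigma=+1$, the signs are reversed.} $t_1+t_2+t_3=-1$ holds iff exactly one $t_i$ equals $+1$; exactly one $t_i=-1$ gives $+1$. Under your convention that $y$ is true iff $s_{v_y}=-s_r$, the group would enforce exactly two true variables. Both directions of your correctness argument fail as written.

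Your remark that two-pair groups cannot pin a value is also mistaken. When the two pairs share a vertex, Lemma~\ref{lem:pair-two} collapses to $x_r+x_w\equiv1$, which fixes the product $s_rs_w=-1$. A single NOT group $\{\{r,z_a\},\{z_a,w_a\}\}$ therefore forces $\sigma_a=-1$. That is exactly the value your convention needs, and it would repair your construction.

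The paper's proof avoids forcing entirely. It uses two global vertices $A,B$ (your $r$ and a single $w$ shared by all clauses) and the group $E_a=\{\{A,B\},\{A,i\},\{A,j\},\{A,\ell\}\}$. The key difference is that it reads truth values relative to $B$, not relative to the hub:
\[
C_a^{\mathcal E}(s)=s_As_B\bigl(1+\tau_i+\tau_j+\tau_\ell\bigr),\qquad \tau_v:=s_Bs_v,
\]
with $x_v$ true iff $\tau_v=+1$. Since $s_As_B\in\Sset$ is a common nonzero factor, the group vanishes iff the bracket is zero. The bracket equals $2(c-1)$ when $c$ of the $\tau$'s are $+1$, so this happens iff exactly one variable is true. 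The unknown sign of $s_As_B$ is absorbed by factoring rather than pinned by a gadget. The target therefore needs only two vertices beyond the source variables and one four-pair group per clause.
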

\begin{proof}
Membership in $\NP$ follows from Proposition~\ref{prop:pair-np}, so it remains to prove NP-hardness. We reduce from Distinct Positive 1-in-3-SAT, which is NP-complete by Lemma~\ref{pair-src-lem:distinct-p13-npc}.

\emph{Clause construction.} Let the source variables be $x_1,\ldots,x_t$, with clauses $R(x_i,x_j,x_\ell)$ in which $i,j,\ell$ are pairwise distinct. Construct one Pair-Set vertex for every source variable and two additional vertices $A,B$, distinct from all source-variable vertices. For each clause $R(x_i,x_j,x_\ell)$, create the four-edge set
\begin{equation}
E_a=\{\{A,B\},\{A,i\},\{A,j\},\{A,\ell\}\},
\end{equation}
whose edges are distinct because $i,j,\ell$ are pairwise distinct and $A,B$ are new vertices.

The correlation term for $E_a$ can be written as
\begin{equation}
\begin{aligned}
C_a^{\mathcal E}(s)
&=s_As_B+s_As_i+s_As_j+s_As_\ell\\
&=s_As_B\bigl(1+\tau_i+\tau_j+\tau_\ell\bigr)
\end{aligned}
\label{pair-thm-eq:tau-factor}
\end{equation}
where $\tau_v:=s_Bs_v\in\Sset$.

Interpreting $\tau_v=+1$ as Boolean true and $\tau_v=-1$ as false, suppose exactly $r\in\{0,1,2,3\}$ of $\tau_i,\tau_j,\tau_\ell$ equal $+1$. Then $\tau_i+\tau_j+\tau_\ell=2r-3$, so the bracket in Eq.~(\ref{pair-thm-eq:tau-factor}) equals $2(r-1)$. Therefore
\begin{equation}\label{pair-thm-eq:zero-iff-one}
C_a^{\mathcal E}(s)=0
\quad\text{iff}\quad
r=1.
\end{equation}

Thus the correlation associated with $E_a$ vanishes exactly when the corresponding source clause has one true variable under the induced Boolean interpretation.

\emph{Completeness.} Given a satisfying source assignment, set $s_B=+1$, set $s_v=+1$ for true source variables and $s_v=-1$ for false source variables, and choose either sign for $s_A$. Every source clause has exactly one true variable, so Eq.~(\ref{pair-thm-eq:zero-iff-one}) gives zero correlation for every target group and hence zero total energy.

\emph{Soundness.} Given a zero-energy target assignment, every group correlation vanishes. Define the source truth assignment by declaring $x_v$ true exactly when $\tau_v=s_Bs_v=+1$. Eq.~(\ref{pair-thm-eq:zero-iff-one}) then implies that every source clause contains exactly one true variable.

\emph{Polynomial size.} If the source has $t$ variables and $q$ clauses, the target has $n=t+2$ vertices and $q$ groups, each containing four edges. Since $n=t+2$ is encoded in unary, its representation has length $t+2$, and every edge endpoint requires $O(\log(t+2))$ bits. Thus the target encoding has polynomial length and the construction is computable in polynomial time. This proves NP-hardness, and membership in $\NP$ gives NP-completeness.
\qed
\end{proof}

\subsection{Masked-LABS}\label{ch:masked}
\emph{Masked-LABS} allows the input to select, for each lag $k$, a subset of the pair products $s_is_{i+k}$, $i\in\dbn{1}{N-k}$, that contribute to the autocorrelation $C_k(s)$. The selected terms retain the defining lag-$k$ relation between their endpoints, while the complete lag diagonal need not be present. This geometry is illustrated in Fig.~\ref{fig:support-geometries}(c).

For a length-$N$ instance, define
\begin{equation}
\Omega_N:=\{(k,i):k\in\dbn{1}{N-1},\ i\in\dbn{1}{N-k}\}.
\end{equation}

A {\em mask} is the explicit sparse support set $M\subseteq\Omega_N$, represented as a canonical lexicographically sorted list with no duplicates. For each lag $k$, define
\begin{equation}
I_k^M:=\{i\in\dbn{1}{N-k}:(k,i)\in M\},
\qquad
L(M):=|M|=\sum_{k=1}^{N-1}|I_k^M|.
\end{equation}

The canonical threshold-instance encoding is
\begin{equation}\label{eq:masked-encoding}
\enc_{\mathrm{Masked}}(N,M,B),
\end{equation}
where $N$ is represented in unary, $M$ by its sparse canonical support list, and $B$ in binary. Its bit length is $O\!\left(N+L(M)\log(N+1)+\log(B+2)\right)$.

The masked autocorrelation at lag $k$ is
\begin{equation}
C_k^M(s)
:=\sum_{i\in I_k^M}s_is_{i+k},
\end{equation}
and the Masked-LABS energy is
\begin{equation}
E_M(s)
:=\sum_{k=1}^{N-1}\bigl(C_k^M(s)\bigr)^2.
\end{equation}

Thus the superscript $M$ on $C_k^M$ means ``correlation under mask $M$,'' and the subscript $M$ on $E_M$ records that the objective is determined by the entire mask.

The Masked-LABS optimization problem minimizes $E_M(s)$ over $s\in\Sset^N$.

\begin{definition}[Masked-LABS decision]\label{def:masked-dec}
Under the encoding in Eq.~(\ref{eq:masked-encoding}), define the language
\begin{equation}
L_{\mathrm{Masked}}:=\{\enc_{\mathrm{Masked}}(N,M,B):\exists s\in\Sset^N,\ E_M(s)\le B\}.
\end{equation}

\end{definition}

Since $E_M(s)$ is a sum of squares, $E_M(s)=0$ if and only if $C_k^M(s)=0$ for every $k$ with $I_k^M\ne\varnothing$.

Define the \emph{masked-row cardinality} at lag $k$ as $|I_k^M|$, the number of retained pair terms in that row, and define
\begin{equation}
\rho(M):=\max_{k\in\dbn{1}{N-1}}|I_k^M|.
\end{equation}

Each masked row is itself a set of pair products. Therefore the parity arguments from Pair-Set-LABS apply row by row, with the additional restriction that all pairs in row $k$ have the same difference $k$.

\begin{proposition}[Tractability at cardinality three]\label{prop:masked-three}
The zero-energy decision version of Masked-LABS is in $\Ptime$ on instances satisfying $\rho(M)\le3$.
\end{proposition}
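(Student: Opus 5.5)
The plan is to reduce Proposition~\ref{prop:masked-three} to the Pair-Set case, since each masked row is a set of pair products. Given an instance $\enc_{\mathrm{Masked}}(N,M,0)$, I will form the Pair-Set instance on the vertex set $V_N$ with $n=N$. It has one group
\begin{equation}
E_k:=\bigl\{\{i,i+k\}:i\in I_k^M\bigr\}
\end{equation}
for each lag $k$ with $I_k^M\ne\varnothing$. Distinct $i$ give distinct edges $\{i,i+k\}$, so $|E_k|=|I_k^M|$ and hence $\rho(\mathcal E)=\rho(M)\le3$. The correlations coincide termwise: $C_k^{\mathcal E}(s)=C_k^M(s)$ for every $s\in\Sset^N$. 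Therefore $E_M(s)=0$ if and only if $E_{\mathcal E}(s)=0$.

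The construction only reads the sorted mask list and writes at most $L(M)$ edges. It is computable in time polynomial in the encoding length $O(N+L(M)\log(N+1))$, and $N$ stays in unary. Composing this map with the decision procedure of Proposition~\ref{prop:pair-three} then decides masked zero-energy feasibility in polynomial time.

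For a self-contained argument I would also spell out that procedure directly. If some nonempty row has $|I_k^M|\in\{1,3\}$, Lemma~\ref{lem:pair-odd} forces $C_k^M(s)\ne0$, so the answer is NO. Otherwise every active row has exactly two terms, $\{i,i+k\},\{j,j+k\}$ with $i\ne j$. Lemma~\ref{lem:pair-two} turns each such row into the affine equation
\begin{equation}
x_i+x_{i+k}+x_j+x_{j+k}\equiv1\pmod 2,
\end{equation}
where repeated indices, such as $j=i+k$, cancel. Gaussian elimination over $\Ftwo$ on at most $N-1$ equations in $N$ unknowns then decides consistency. A consistent solution $x$ gives the witness $s_v=(-1)^{x_v}$.

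The only point needing care is the bookkeeping that the lag-diagonal structure does not break the Pair-Set argument. Two pair terms within one row are never equal, and coincident endpoints only produce cancellations. Lemma~\ref{lem:pair-two} already covers that case, so I expect no real obstacle. I would make sure to state explicitly that empty rows impose no constraint, matching the convention $\max\varnothing=0$.
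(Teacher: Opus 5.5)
Your proposal is correct and follows essentially the same route as the paper: rows of cardinality one or three are rejected by parity, each two-term row becomes the affine equation $x_i+x_{i+k}+x_j+x_{j+k}\equiv1\pmod 2$ over $\Ftwo$, and Gaussian elimination decides consistency. Your explicit reduction to Pair-Set-LABS, using $|E_k|=|I_k^M|$, simply formalizes the paper's remark that the Pair-Set parity arguments apply row by row.
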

\begin{proof}
If some nonempty masked row has cardinality one or three, its correlation is a sum of an odd number of $\pm1$ terms and cannot be zero. Otherwise every nonempty masked row has cardinality two. Writing $s_v=(-1)^{x_v}$, the zero condition for a cardinality-two masked row, $s_is_{i+k}+s_js_{j+k}=0$, is equivalent to the affine constraint $x_i+x_{i+k}+x_j+x_{j+k}=1\pmod2$.

Collecting these constraints yields an affine linear system $Ax=b$ over $\Ftwo$, with one variable $x_v$ for each spin position and one constraint for each nonempty masked row. Applying Gaussian elimination over $\Ftwo$ to the augmented matrix $[A\mid b]$ determines whether the system is consistent: a row $0\;0\;\cdots\;0\mid1$ gives a contradiction, while
otherwise the system has a solution and hence a zero-energy spin assignment. Since the system has polynomially many variables and constraints in the input size, the procedure runs in polynomial time.
\qed
\end{proof}

\subsection{Cardinality-four hardness}\label{sec:masked-hardness}

Here we establish NP-hardness for Masked-LABS using cardinality-two equality rows and cardinality-four clause rows consisting of disjoint parallel edges.

\begin{lemma}[NOT row]\label{lem:spin-not-wire}
For spin positions $a,a+k,a+2k$, the zero equation $s_as_{a+k}+s_{a+k}s_{a+2k}=0$ is equivalent to $s_{a+2k}=-s_a$.
\end{lemma}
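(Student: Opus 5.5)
The plan is a direct algebraic manipulation of the two-term zero equation, in the same spirit as Lemma~\ref{lem:pair-two}. The two pair products $s_as_{a+k}$ and $s_{a+k}s_{a+2k}$ share the middle spin $s_{a+k}$. First we factor it out and write
\begin{equation}
s_as_{a+k}+s_{a+k}s_{a+2k}=s_{a+k}\bigl(s_a+s_{a+2k}\bigr).
\end{equation}
Because $s_{a+k}\in\Sset$ is nonzero, the product vanishes exactly when $s_a+s_{a+2k}=0$. That condition is the same as $s_{a+2k}=-s_a$.

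Both directions come out of this chain of equivalences at once. If $s_{a+2k}=-s_a$, the bracket is zero, so the row correlation is zero whatever the value of $s_{a+k}$. Conversely, if the sum is zero, we can divide by $s_{a+k}$, equivalently multiply by $s_{a+k}$ using $s_{a+k}^2=1$, to get $s_a+s_{a+2k}=0$.

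As a consistency check, the parity form from Lemma~\ref{lem:pair-two} gives $x_a+2x_{a+k}+x_{a+2k}\equiv1\pmod 2$. Here the repeated vertex $a+k$ cancels, which leaves $x_a+x_{a+2k}\equiv1$. This again says the two spins are opposite.

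The only point that needs care is that positions $a$, $a+k$ and $a+2k$ must all lie in $V_N$ for the row to be a legitimate lag-$k$ masked row. This requires $k\ge1$ and $a+2k\le N$. Under these conditions the indices $a$ and $a+k$ both belong to $I_k^M$, and the two pairs are distinct edges on the same lag diagonal. The rest is immediate, so I do not expect any real obstacle.
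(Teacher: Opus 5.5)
Your proposal is correct and follows the paper's own proof: factor out the shared middle spin $s_{a+k}$, use $s_{a+k}\ne0$, and conclude $s_a+s_{a+2k}=0$. The parity cross-check via Lemma~\ref{lem:pair-two} and the remark on index ranges are accurate extras that the paper's proof does not need.
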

\begin{proof}
Factoring the left-hand side gives $s_{a+k}(s_a+s_{a+2k})=0$. Since $s_{a+k}\in\Sset$, this holds if and only if $s_a+s_{a+2k}=0$, equivalently $s_{a+2k}=-s_a$.
\qed
\end{proof}

If two NOT rows share the same newly introduced endpoint $z$, their zero equations enforce $s_z=-s_p$ and $s_z=-s_q$, and therefore $s_p=s_q$. Lemma~\ref{lem:unique-lag-embed} formalizes this construction while assigning distinct lags, avoiding collisions among newly introduced spin positions, and keeping the sequence length polynomially bounded.

\begin{lemma}[Unique-lag equality embedding]\label{lem:unique-lag-embed}
Let $P_0\subset\mathbb Z_{>0}$ be a nonempty finite set of spin positions, all having the same parity, and let $F_0\subset\mathbb Z_{>0}$ be a finite set of lags that the new rows are not allowed to use. Given $R$ pairs $(p_r,q_r)\in P_0^2$ specifying constraints $s_{p_r}=s_{q_r}$, trivial constraints with $p_r=q_r$ may be discarded. Every remaining constraint can be implemented by two cardinality-two masked rows of the form in Lemma~\ref{lem:spin-not-wire} such that:
\begin{enumerate}
\item every new row uses a lag outside $F_0$, and all new lags are distinct;
\item all newly introduced spin positions are distinct from previously used spin positions, except that the two rows implementing the same constraint share the endpoint $z_r$;
\item $z_r$ has the same parity as $p_r$ and $q_r$, so the required midpoint positions are integral; and
\item the largest spin position introduced is bounded by a polynomial in $\max P_0+|P_0|+|F_0|+R$.
\end{enumerate}
\end{lemma}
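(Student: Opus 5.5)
We give an explicit greedy construction that processes the nontrivial constraints one at a time. Write $D:=\max P_0+|F_0|+1$, and choose a base position $Z_0>\max P_0$ with $Z_0\equiv p$ modulo $2$ for the common parity of $P_0$. For the $r$th remaining constraint $(p_r,q_r)$ with $p_r\ne q_r$, we place a new endpoint $z_r$ far to the right of everything used so far. We then use the two NOT rows of Lemma~\ref{lem:spin-not-wire}. The first row uses positions $p_r,\,m_r,\,z_r$ with lag $k_r=(z_r-p_r)/2$ and midpoint $m_r=p_r+k_r$. The second row uses $q_r,\,m_r',\,z_r$ with lag $k_r'=(z_r-q_r)/2$ and midpoint $m_r'=q_r+k_r'$. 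Item 3 holds by construction: we take $z_r\equiv p_r\equiv q_r \pmod 2$, so both lags and both midpoints are integers. By Lemma~\ref{lem:spin-not-wire}, the two zero equations give $s_{z_r}=-s_{p_r}$ and $s_{z_r}=-s_{q_r}$, hence $s_{p_r}=s_{q_r}$. Conversely, any assignment with $s_{p_r}=s_{q_r}$ extends by setting $s_{z_r}=-s_{p_r}$, with the midpoints free.

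For items 1 and 2, we let $z_r$ grow geometrically. Concretely, set $z_r:=Z_0+4rT$, where $T$ is a fixed even number exceeding $\max P_0+|F_0|$. Then $z_r$ keeps the right parity. The lags $k_r,k_r'$ satisfy $k_r\ge (z_r-\max P_0)/2$, which exceeds $\max F_0$ once $T$ is large. This requires $T>\max F_0$ as well, so we include $\max F_0$ in the bound on $T$. If $\max F_0$ is not polynomially bounded in the listed quantities, we instead avoid $F_0$ by counting: within a window of $|F_0|+1$ admissible parity-shifted choices of $z_r$, at least one gives lags outside $F_0$. This is the step where $|F_0|$ enters the bound, and it is the version we would actually use.

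Distinctness of lags comes in two parts. Within one constraint, $k_r\ne k_r'$ because $p_r\ne q_r$. Across constraints, the lags of constraint $r$ lie in $[(z_r-\max P_0)/2,\,z_r/2]$, an interval of width at most $\max P_0/2$. These intervals are pairwise disjoint when consecutive $z_r$ differ by more than $\max P_0$.

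For positions, the midpoints $m_r=(p_r+z_r)/2$ and $m_r'=(q_r+z_r)/2$ lie in $[z_r/2,\,(z_r+\max P_0)/2]$, and $z_r$ itself is larger still. We must check three things: these do not hit $P_0$, they do not hit positions of earlier constraints, and they do not coincide within the constraint. Within the constraint, $m_r\ne m_r'$ since $p_r\ne q_r$, and both are smaller than $z_r$. Against $P_0$, note $z_r/2>\max P_0$ once $Z_0>2\max P_0$. Against earlier constraints, all positions of constraint $r'<r$ lie below $z_{r'}+1\le z_{r-1}+1$. So it suffices that $z_r/2>z_{r-1}$, which forces doubling, $z_r\approx 2^r$. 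That would be exponential and would violate item 4.

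This tension between disjointness and polynomial size is the main obstacle. To resolve it, we avoid collisions with an offset scheme rather than doubling. The new positions of constraint $r$ are $z_r$ and midpoints in the window $W_r=[z_r/2,(z_r+\max P_0)/2]$. We choose $z_r=Z_0+2r\,(\max P_0+2)$ with $Z_0$ of order $4R(\max P_0+2)$. Then every $z_r$ lies in $[Z_0,\,2Z_0)$, and every midpoint window lies in $[Z_0/2,\,Z_0)$. So endpoints and midpoints occupy disjoint ranges, and the midpoint windows $W_r$ are pairwise disjoint because consecutive windows shift by $\max P_0+2$. The lag intervals $[(z_r-\max P_0)/2,\,z_r/2]$ are likewise pairwise disjoint by the same shift. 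Finally, we avoid $F_0$ by perturbing each $z_r$ by one of $|F_0|+1$ extra even offsets, enlarging the spacing accordingly. All positions are then $O\bigl(R(\max P_0+|F_0|+2)\bigr)$, which gives item 4.

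The remaining verification is a routine interval-disjointness check.
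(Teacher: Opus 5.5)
Your final construction (the last paragraph) is correct apart from one counting slip, and it takes a different route from the paper.

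\textbf{Comparison with the paper.} The paper processes the constraints greedily. Each new $z$ is taken above every position used so far, and all bad candidates are counted uniformly. At most $2|F|$ values of $z$ put $(z-p)/2$ or $(z-q)/2$ into $F$, which contains $F_0$ together with all previously assigned lags. At most $2|P|$ values put a midpoint on an already used position, including earlier midpoints and endpoints. Hence one of $2|F|+2|P|+1$ consecutive same-parity candidates is admissible. The paper never requires midpoints to lie above earlier positions, so the doubling tension you identified does not arise there.

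You instead fix an arithmetic progression for $z_r$, with spacing exceeding $\max P_0$, and separate bands: endpoints in $[Z_0,2Z_0)$ and midpoints in $[Z_0/2,Z_0)$. Distinctness of all new positions and of all new lags then follows from interval disjointness, and only $F_0$ needs a counting argument.

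What each approach buys:
\begin{itemize}
\item Your layout is nearly explicit and makes items~1 and~2 automatic among the new objects. Its bound is $O(R(\max P_0+|F_0|))$.
\item The paper's greedy argument needs no band bookkeeping. Its bound is additive in $\max P_0$, namely $\max P_0+O(R(|P_0|+|F_0|+R))$.
\end{itemize}
Both bounds are polynomial in the quantities listed in item~4, which is all that is required. You were also right not to rely on $\max F_0$, since it is not among those quantities.

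\textbf{The slip.} Choosing among $|F_0|+1$ offsets does not suffice to avoid $F_0$. Each $f\in F_0$ can exclude two values of $z_r$, namely $p_r+2f$ and $q_r+2f$. For instance, with $F_0=\{f\}$ and $q_r=p_r+2$, the candidates $p_r+2f$ and $p_r+2f+2$ both fail. To fix this:
\begin{itemize}
\item use $2|F_0|+1$ even offsets;
\item enlarge the spacing and $Z_0$ by $O(|F_0|)$ so the band and disjointness properties survive any choice of offsets;
\item take $Z_0$ with the parity of $P_0$, as your first paragraph required. Your explicit $Z_0\approx 4R(\max P_0+2)$ is even, so add $1$ when $P_0$ is odd.
\end{itemize}
In a final write-up you should also drop the abandoned geometric-growth and doubling attempts, since they contradict the construction you actually use.
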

The proof is given in Appendix~\ref{app:masked-embedding}.

We now specify the positions used in the reduction from 2-2 Set Splitting \cite{Guruswami,SpielmanZhang2022}. Let the source instance have universe $U=\{v_1,\ldots,v_t\}$ and four-element sets $S_1,\ldots,S_m$, each containing four distinct elements. Since an instance with no sets is trivially satisfiable, we assume $m\ge1$ for the construction. Assign the global reference spin position $g:=2$ and the representative spin position $x_{v_j}:=2(j+1)$ to each $v_j\in U$, so all representative positions are even. For each $a\in\dbn{1}{m}$, set the even clause lag $K_a:=2(a+4)$; these lags are pairwise distinct and satisfy $K_a\ge10$. Let
\begin{equation*}
K_{\max}:=2(m+4), \quad W:=2(K_{\max}+8), \quad B_a:=2(t+2)+aW, \quad a\in\dbn{1}{m}.
\end{equation*}

Writing $S_a=\{v_{a,1},v_{a,2},v_{a,3},v_{a,4}\}$, define for $r\in\dbn{1}{4}$
\begin{equation*}
g_{a,r}:=B_a+2(r-1), \qquad x_{a,r}:=g_{a,r}+K_a.
\end{equation*}

\begin{lemma}[Clause placement]\label{lem:masked-clause-place}
The spin positions and lags defined above satisfy:
\begin{enumerate}
\item all designated coordinates are even;
\item the four edges $\{g_{a,r},x_{a,r}\}$, $r\in\dbn{1}{4}$, are pairwise vertex-disjoint and all have lag $K_a$;
\item different clause blocks are vertex-disjoint;
\item all clause lags are pairwise distinct; and
\item the largest designated coordinate is $O(t+m^2)$.
\end{enumerate}
\end{lemma}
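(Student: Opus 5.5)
This is a direct verification from the definitions, taking the five items in order. Throughout, the designated coordinates are $g=2$, $x_{v_j}=2(j+1)$ for $j\in\dbn{1}{t}$, $g_{a,r}=B_a+2(r-1)$, and $x_{a,r}=g_{a,r}+K_a$.

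\emph{Parity (item 1).} The quantities $g$, $x_{v_j}$ and $K_a=2(a+4)$ are even by definition. Since $K_{\max}$ is even, $W=2(K_{\max}+8)$ is even, and hence $B_a=2(t+2)+aW$ is even. It follows that $g_{a,r}$ is even, and so is $x_{a,r}$, being the sum of two even numbers.

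\emph{Within-block structure (item 2).} Fix $a$. The left endpoints $g_{a,r}$ are the four values $B_a,B_a+2,B_a+4,B_a+6$, so they are pairwise distinct and lie in $\dbn{B_a}{B_a+6}$. The right endpoints are the same values shifted by $K_a$, so they lie in $\dbn{B_a+K_a}{B_a+K_a+6}$. Because $K_a\ge10>6$, these two ranges are disjoint. Therefore the eight endpoints are distinct, the four edges are pairwise vertex-disjoint, and each edge has lag $x_{a,r}-g_{a,r}=K_a$.

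\emph{Separation across blocks and from the global positions (item 3).} Block $a$ occupies positions in $\dbn{B_a}{B_a+K_a+6}$. The next block starts at $B_{a+1}=B_a+W=B_a+2K_{\max}+16$, which exceeds $B_a+K_a+6$ because $K_a\le K_{\max}$. Hence the block intervals are pairwise disjoint and increasing in $a$. The same argument shows that every clause position exceeds the global ones: the smallest clause position is $B_1>2(t+2)>2(t+1)=\max_j x_{v_j}\ge g$. This stronger separation is not strictly required for item 3, but it will be needed later.

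\emph{Distinct lags and size bound (items 4 and 5).} The map $a\mapsto 2(a+4)$ is injective, so the lags $K_a$ are pairwise distinct. The largest designated coordinate is $x_{m,4}=B_m+6+K_m$. Since $W=O(m)$, we have $B_m=2(t+2)+mW=O(t+m^2)$, and therefore $x_{m,4}=O(t+m^2)$.

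The only step needing any care is item 3, where the gap between consecutive blocks must exceed the span of a block. The choice $W=2(K_{\max}+8)$ is made precisely to guarantee this; the remaining items are immediate from the definitions.
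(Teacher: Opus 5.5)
Your proof is correct and follows essentially the same direct verification as the paper's proof: the shift $K_a\ge10>6$ separates left and right endpoints within a block, the gap $W$ exceeds the block span $K_a+6$ across blocks, $a\mapsto 2(a+4)$ is injective, and $B_m=O(t+m^2)$. You also check explicitly that $g$ and the $x_{v_j}$ are even and lie below $B_1$, which the paper leaves implicit. That check justifies both that $x_{m,4}$ is the overall maximum and the count $|P_0|=1+t+8m$ used in Theorem~\ref{thm:masked-four}.
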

The proof is given in Appendix~\ref{app:masked-embedding}.

\begin{theorem}[Cardinality-four NP-completeness]\label{thm:masked-four}
Zero-energy feasibility for Masked-LABS is NP-complete when every nonempty masked row has cardinality at most four.  Hardness holds even when every cardinality-four row consists of four disjoint parallel equal-lag edges and every cardinality-two row is a consecutive-edge NOT row.
\end{theorem}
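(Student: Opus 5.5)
Membership in $\NP$ follows as in Proposition~\ref{prop:pair-np}: the certificate is $s\in\Sset^N$, $N$ is in unary, and all masked correlations are computed from the explicit support list. For hardness we reduce from 2-2 Set Splitting. Its instances are four-element sets over a universe $U$, and the question is whether there is a 2-coloring of $U$ under which every set has exactly two elements of each color. We use the positions of Lemma~\ref{lem:masked-clause-place}.

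For each set $S_a$ we add one clause row at lag $K_a$ with support $I_{K_a}^M=\{g_{a,1},\ldots,g_{a,4}\}$. Its correlation is $\sum_{r=1}^4 s_{g_{a,r}}s_{x_{a,r}}$. By Lemma~\ref{lem:masked-clause-place}(2), the four edges are disjoint and parallel. We then impose equality constraints $s_{g_{a,r}}=s_g$ and $s_{x_{a,r}}=s_{x_{v_{a,r}}}$ for all $a,r$. These are realized through Lemma~\ref{lem:unique-lag-embed} with the following data:
\begin{enumerate}
\item $P_0$ is the set of all designated even coordinates, namely $g$, the representatives $x_{v_j}$, and the clause positions $g_{a,r},x_{a,r}$;
\item $F_0=\{K_1,\ldots,K_m\}$ is the set of forbidden lags;
\item $R=8m$ is the number of equality pairs.
\end{enumerate}
The lemma guarantees four properties. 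Every new NOT row has its own lag, distinct from all clause lags, so each lag hosts exactly one row and rows never merge. The new positions are fresh, so no unintended coincidences arise. Integrality of midpoints follows from the common parity. Finally, $N$ is polynomial in $t+m$, since by Lemma~\ref{lem:masked-clause-place}(5) we have $\max P_0=O(t+m^2)$. We take $N$ to be the largest position used. Every position not mentioned in any row is a free spin and is harmless.

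Under the equalities, the clause correlation equals $s_g\sum_r s_{x_{v_{a,r}}}$. Set $\sigma_v:=s_gs_{x_v}$. The correlation vanishes iff $\sum_r\sigma_{v_{a,r}}=0$, that is, iff exactly two of the four $\sigma$ values are $+1$. This is precisely the 2-2 splitting condition.

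\emph{Completeness.} Given a splitting, set $s_g=+1$ and $s_{x_v}=\pm1$ according to the color of $v$. Propagate these values through the equalities. For each NOT gadget, set the shared endpoint $z_r=-s_{p_r}$ and choose the midpoints arbitrarily; Lemma~\ref{lem:spin-not-wire} shows each NOT row then vanishes. Every clause row vanishes by the computation above. Free positions are set arbitrarily.

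\emph{Soundness.} Zero energy forces every row to vanish. The NOT rows then force the equalities, via Lemma~\ref{lem:spin-not-wire} and the shared endpoint. The clause rows then force every set to be 2-2 split under the coloring given by $\sigma$.

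All rows have cardinality two or four, of the stated shapes. The construction is polynomial because it places $O(m)$ rows with polynomially bounded coordinates.

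The main obstacle is the bookkeeping that makes each masked row correspond exactly to a single intended constraint. Distinct rows must sit on distinct lags, since any two terms sharing a lag would be summed into one correlation. The midpoints $a+k$ of NOT rows must also not collide with clause positions. Both requirements are exactly what Lemma~\ref{lem:unique-lag-embed} supplies, provided its hypotheses hold: equal parity of $P_0$ from Lemma~\ref{lem:masked-clause-place}(1), and $F_0$ containing all clause lags. Verifying these hypotheses is therefore the step to check carefully.
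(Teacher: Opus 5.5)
Your proposal is correct and follows the paper's proof essentially step for step. You use the same 2-2 Set Splitting source, the same clause rows from Lemma~\ref{lem:masked-clause-place}, and the same application of Lemma~\ref{lem:unique-lag-embed} with $F_0=\{K_1,\ldots,K_m\}$ and $R=8m$, and your completeness and soundness arguments rest on the same reduced clause correlation $s_g\sum_r s_{x_{v_{a,r}}}$. Two small differences are inessential. Reading the coloring off $\sigma_v=s_gs_{x_v}$ rather than $s_{x_v}$ changes nothing, because $s_g\sum_r s_{x_{v_{a,r}}}=0$ iff $\sum_r s_{x_{v_{a,r}}}=0$. Also, the midpoint non-collision you single out as the main obstacle is not actually needed for correctness, since a NOT row vanishes via $s_{m_p}(s_p+s_z)=0$ regardless of the value of $s_{m_p}$; only the freshness of $z$ and the distinctness of lags matter.
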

\begin{proof}
Membership in $\NP$ follows from the unary-length support-list encoding: a certificate is a spin sequence $s\in\Sset^N$, and all selected pair products can be evaluated in polynomial time.

For NP-hardness, reduce from 2-2 Set Splitting \cite{Guruswami,SpielmanZhang2022}. Let the source instance have universe $U=\{v_1,\ldots,v_t\}$ and sets $S_1,\ldots,S_m$, each containing four distinct elements. A source assignment associates a sign in $\Sset$ with every element of $U$, and a set is satisfied exactly when its four signs split two-plus/two-minus.

\emph{Construction.} Use the reference spin position $g$, the representative spin positions $x_v$, the clause lags $K_a$, and the clause positions $g_{a,r},x_{a,r}$ defined above. By Lemma~\ref{lem:masked-clause-place}, all these positions have the required parity, the four edges $\{g_{a,r},x_{a,r}\}$, $r\in\dbn{1}{4}$, form a matching for each source set, the clause lags are pairwise distinct, and the largest position is $O(t+m^2)$.

For a source set $S_a=\{v_{a,1},v_{a,2},v_{a,3},v_{a,4}\}$, include the four edges $\{g_{a,r},x_{a,r}\}$, $r\in\dbn{1}{4}$, in the masked row at lag $K_a$. Impose the equality constraints $s_{g_{a,r}}=s_g$ and $s_{x_{a,r}}=s_{x_{v_{a,r}}}$ for every $a\in\dbn{1}{m}$ and $r\in\dbn{1}{4}$.

Let $P_0$ be the set consisting of $g$, all representative positions $x_v$, and all clause positions $g_{a,r},x_{a,r}$. Let $F_0:=\{K_1,\ldots,K_m\}$. The set $P_0$ contains $1+t+8m$ spin positions, all of the same parity, while $|F_0|=m$ and $\max P_0=O(t+m^2)$. Apply Lemma~\ref{lem:unique-lag-embed} to the $8m$ equality constraints. Each constraint is implemented by two cardinality-two masked rows whose lags are distinct from one another and from all clause lags. The lemma also guarantees that every newly introduced spin position and the final sequence length remain polynomially bounded in $t+m$.

After all equality constraints have been implemented, let $N$ be the largest spin position introduced and let $M\subseteq\Omega_N$ be the resulting mask. The target instance is $\enc_{\mathrm{Masked}}(N,M,0)$. Every nonempty masked row has cardinality two or four, so $\rho(M)\le4$.

If the equality constraints associated with source set $S_a$ hold, then its four-term correlation is
\begin{equation}\label{masked-eq:clause-reduced}
C_{K_a}^M(s)=\sum_{r=1}^4 s_{g_{a,r}}s_{x_{a,r}}=s_g\sum_{r=1}^4 s_{x_{v_{a,r}}}.
\end{equation}

Hence the clause row has zero correlation exactly when the four representative spins associated with $S_a$ split two-plus/two-minus.

\emph{Completeness.} Suppose the source instance has a valid 2-2 splitting. Assign each representative spin $s_{x_v}$ its source sign and choose either sign for the reference spin $s_g$. Set every clause position so that $s_{g_{a,r}}=s_g$ and $s_{x_{a,r}}=s_{x_{v_{a,r}}}$. For each equality constraint $s_p=s_q$, the construction of Lemma~\ref{lem:unique-lag-embed} allows the shared spin $s_z$ to be set to $-s_p=-s_q$; the two intermediate spins may be chosen arbitrarily. Thus every cardinality-two row has zero correlation. Since every source set is 2-2 split, Eq.~(\ref{masked-eq:clause-reduced}) also gives zero correlation for every cardinality-four clause row. Therefore the target instance has zero energy.

\emph{Soundness.} Suppose the target instance has zero energy. Every cardinality-two row introduced by Lemma~\ref{lem:unique-lag-embed} has zero correlation, so all equality constraints are satisfied. Consequently, for every source set $S_a$ and $r\in\dbn{1}{4}$, $s_{g_{a,r}}=s_g$ and $s_{x_{a,r}}=s_{x_{v_{a,r}}}$. Each clause row also has zero correlation, and Eq.~(\ref{masked-eq:clause-reduced}) therefore implies $\sum_{r=1}^4 s_{x_{v_{a,r}}}=0$. Thus exactly two representative spins in every source set equal $+1$ and exactly two equal $-1$. The assignment $v\mapsto s_{x_v}$ is therefore a valid 2-2 splitting of the source instance.

\emph{Polynomial size.} There is one cardinality-four row for each source set and two cardinality-two rows for each of the $8m$ equality constraints. Lemma~\ref{lem:unique-lag-embed} bounds the largest introduced spin position by a polynomial in $t+m$, so the unary encoding of $N$ has polynomial length. The mask contains only polynomially many explicitly listed pair terms, and all positions and lags are constructed using polynomially many arithmetic operations. Hence the target instance has polynomial size and can be constructed in polynomial time. This proves NP-hardness, and membership in $\NP$ gives NP-completeness.
\qed
\end{proof}

\section{Contiguous supports}\label{sec:contiguous}\label{ch:single-window}

A single-window Masked-LABS instance is a mask $M$ with the additional property that every nonempty support is one interval of consecutive start positions. For each active lag $k$, there are integers $a_k$ and $D_k\ge1$ satisfying
\begin{equation}
1\le a_k\le N-k-D_k+1
\end{equation}
such that
\begin{equation}
I_k^M=\dbn{a_k}{a_k+D_k-1}.
\end{equation}

The masked row is therefore
\begin{equation}
C_{k,a_k,D_k}(s)
:=\sum_{i=a_k}^{a_k+D_k-1}s_is_{i+k}
\end{equation}
where the \emph{window length} is $D_k=|I_k^M|$. 

Figure~\ref{fig:support-geometries}(d) illustrates this support geometry. The optimization and threshold decision problems are the restrictions of Masked-LABS to such masks, under the same canonical encoding as in Eq.~(\ref{eq:masked-encoding}). Since single-window instances form a subclass of Masked-LABS, Proposition~\ref{prop:masked-three} immediately gives polynomial-time zero-energy feasibility when every window has length at most three.

\subsection{Bond-copy constraints}

To construct equality constraints using length-two windows, we express consecutive correlation terms through nearest-neighbor bond variables. For a spin sequence $s\in\Sset^N$, define $b_j:=s_js_{j+1}$ for $j\in\dbn{1}{N-1}$. Every assignment $b\in\Sset^{N-1}$ is realized by a spin sequence: choose $s_1\in\Sset$ arbitrarily and set $s_{j+1}:=s_jb_j$ recursively. For lag $k$, write $y_i^{(k)}:=s_is_{i+k}$. Then
\begin{equation}\label{window-eq:ratio}
y_{i+1}^{(k)}=y_i^{(k)}b_i b_{i+k}.
\end{equation}

\begin{lemma}[Bond NOT]\label{lem:window-not}
For a length-two window of lag $k$ beginning at $a$, $C_{k,a,2}(s)=0$ if and only if $b_{a+k}=-b_a$. If $k\ge2$, its two selected lag-$k$ edges form a matching.
\end{lemma}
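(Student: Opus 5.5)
The plan is a direct computation in the bond variables, followed by a short check of the matching claim.

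\emph{Zero condition.} Write the window as $C_{k,a,2}(s)=y_a^{(k)}+y_{a+1}^{(k)}$. By Eq.~(\ref{window-eq:ratio}), $y_{a+1}^{(k)}=y_a^{(k)}b_ab_{a+k}$, which gives
\begin{equation*}
C_{k,a,2}(s)=y_a^{(k)}\bigl(1+b_ab_{a+k}\bigr).
\end{equation*}
Since $y_a^{(k)}\in\Sset$ is nonzero, the window vanishes exactly when $b_ab_{a+k}=-1$. Because $b_a\in\Sset$, this is equivalent to $b_{a+k}=-b_a$.

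To avoid relying on the recursion, I would also verify the identity directly. We have $y_a^{(k)}y_{a+1}^{(k)}=s_as_{a+k}s_{a+1}s_{a+k+1}=(s_as_{a+1})(s_{a+k}s_{a+k+1})=b_ab_{a+k}$. Then $y_a^{(k)}+y_{a+1}^{(k)}=0$ holds iff the product of the two terms is $-1$, which is the same condition. This step requires $a+k+1\le N$, and that holds because the window is admissible: $a\le N-k-1$. Hence $b_a$ and $b_{a+k}$ are both defined.

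\emph{Matching claim.} The two edges are $\{a,a+k\}$ and $\{a+1,a+k+1\}$. They share a vertex only if one of the coincidences $a=a+1$, $a=a+k+1$, $a+k=a+1$, or $a+k=a+k+1$ occurs. The first and last are impossible. The second would force $k=-1$, which is impossible. The third forces $k=1$. So for $k\ge2$ the four endpoints are distinct, and the edges form a matching.

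I do not expect any step to be a real obstacle. The only care needed is the index bound that guarantees $b_{a+k}$ exists, and noting that for $k=1$ the edges share vertex $a+1$; this is exactly why the statement requires $k\ge2$.
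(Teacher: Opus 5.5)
Your proof is correct and follows the paper's argument: you use Eq.~(\ref{window-eq:ratio}) to factor $C_{k,a,2}(s)=y_a^{(k)}\bigl(1+b_ab_{a+k}\bigr)$, and you check that the endpoints $a,a+k,a+1,a+k+1$ are distinct when $k\ge2$. Your direct check that $y_a^{(k)}y_{a+1}^{(k)}=b_ab_{a+k}$ and your index bound $a\le N-k-1$ are harmless additions that the paper leaves implicit.
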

\begin{proof}
By Eq.~(\ref{window-eq:ratio}), $y_{a+1}^{(k)}=y_a^{(k)}b_ab_{a+k}$, and therefore $C_{k,a,2}(s)=y_a^{(k)}(1+b_ab_{a+k})$. Since $y_a^{(k)}\in\Sset$, the correlation vanishes if and only if $b_ab_{a+k}=-1$, equivalently $b_{a+k}=-b_a$. The selected spin edges are $\{a,a+k\}$ and $\{a+1,a+1+k\}$; when $k\ge2$, their four endpoints are distinct.
\qed
\end{proof}

Two Bond-NOT rows sharing an intermediate bond can enforce equality between two bond variables. Lemma~\ref{lem:window-bond-embed} next shows how these copy constraints can be implemented while preserving the required geometric restrictions.

\begin{lemma}[Unique-lag bond copy]\label{lem:window-bond-embed}
Let $P_0\subset\mathbb Z_{>0}$ be a nonempty finite set of bond positions, let $F_0\subset\mathbb Z_{>0}$ be a finite set of lags unavailable to the new windows, and let $\mathcal R=((p_1,q_1),\ldots,(p_R,q_R))\in(P_0^2)^R$ specify equality constraints $b_{p_r}=b_{q_r}$ with $p_r\ne q_r$ for every $r$. Write $D:=\max P_0$. There is a polynomial-time construction of distinct intermediate bond positions $z_1,\ldots,z_R$, all greater than $D$, such that each equality constraint $b_{p_r}=b_{q_r}$ is implemented by two length-two windows whose boundary-bond position pairs are $(p_r,z_r)$ and $(q_r,z_r)$, respectively, and such that:
\begin{enumerate}
\item the $2R$ new lags $z_r-p_r$ and $z_r-q_r$ are pairwise distinct and lie outside $F_0$;
\item every new lag is at least $2$, so the two selected edges in each new window form a matching;
\item each new lag supports exactly one contiguous length-two window; and
\item the largest intermediate position satisfies
\begin{equation}\label{window-eq:zbound}
\max_r z_r\le D+2R|F_0|+2R^2.
\end{equation}
\end{enumerate}
\end{lemma}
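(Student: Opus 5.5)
We construct the intermediate positions $z_1,\ldots,z_R$ greedily, one constraint at a time, and check the four properties afterwards. The key tool is Lemma~\ref{lem:window-not}. A length-two window of lag $k$ beginning at bond position $a$ enforces $b_{a+k}=-b_a$. So the boundary-bond pair $(p,z)$ is realized by the window of lag $z-p$ beginning at $a=p$, and similarly $(q,z)$ by the window of lag $z-q$ beginning at $q$. If both windows have zero correlation, then $b_p=-b_z=b_q$. Conversely, given $b_p=b_q$, setting $b_z:=-b_p$ satisfies both windows. Since every bond assignment is realized by a spin sequence, this is a valid copy gadget.

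\emph{Greedy choice.} Process $r=1,\ldots,R$ in order. Let $\Lambda_{r-1}$ be the set of lags already used by constraints $1,\ldots,r-1$, so $|\Lambda_{r-1}|=2(r-1)$. Choose $z_r$ to be the smallest integer $z>\max(D,z_{r-1})+1$ satisfying three conditions: $z-p_r\notin F_0\cup\Lambda_{r-1}$, $z-q_r\notin F_0\cup\Lambda_{r-1}$, and $z$ different from all earlier $z_{r'}$. The requirement $z>z_{r-1}$ already makes the $z_r$ distinct and larger than $D$.

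\emph{Distinct lags.} The two lags of the same constraint are distinct because $p_r\ne q_r$. The new lags avoid $F_0$ and all earlier lags by construction, so all $2R$ lags are pairwise distinct (property 1). Each lag satisfies $z_r-p_r\ge z_r-D\ge2$, since $p_r\le D$ and $z_r\ge D+2$; this gives property 2 via Lemma~\ref{lem:window-not}. Because each lag value is used by exactly one window, and that window is a single interval of two start positions, property 3 holds.

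\emph{Counting argument for the bound.} Each forbidden lag value $\lambda\in F_0\cup\Lambda_{r-1}$ excludes at most two candidates, namely $z=p_r+\lambda$ and $z=q_r+\lambda$. So at most $2(|F_0|+2(r-1))$ candidates are excluded, and the first admissible $z$ lies within that many steps beyond the starting threshold. Summing the increments over $r$ gives
\[
\max_r z_r\le D+\sum_{r=1}^{R}\bigl(2|F_0|+4(r-1)+2\bigr)=D+2R|F_0|+2R^2,
\]
which is exactly Eq.~(\ref{window-eq:zbound}).

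The main obstacle is making this bookkeeping land exactly on the stated constant. The "+1 steps" at each stage and the $4(r-1)$ term must combine to $2R^2$; since $\sum_{r=1}^R(4r-2)=2R^2$, this works if each step starts at $z_{r-1}+1$ and the minimal gap is absorbed correctly. If the count is off by one, I would instead start the search at $z>\max(D+1,z_{r-1})$ and note that the exclusions from the previous constraint's own lags are already counted. A secondary check is that windows with $a=p_r$ and lag $z_r-p_r$ fit inside the bond range, which holds once the final length is $N=\max_r z_r+1$. Polynomial time is clear, since each step scans $O(|F_0|+R)$ candidates.
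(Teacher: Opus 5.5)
Your proof is correct and follows essentially the same route as the paper. Both choose $z_r$ greedily, starting from the candidate $\max(D,z_{r-1})+2$, and note that each forbidden lag excludes at most two candidates, so step $r$ raises the maximum by at most $2|F_0|+4(r-1)+2$; summing gives exactly $D+2R|F_0|+2R^2$. The hedging in your final paragraph is unnecessary, since the computation as you set it up already lands on the stated bound.
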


The proof is given in Appendix~\ref{app:single-window-embedding}.

\subsection{Length-four windows}

The length-four windows are used to encode constraints requiring two positive and two negative signs. We first establish the corresponding identity and then specify the window positions used in the reduction.

Consider a length-four window with start $a$ and lag $k$, and write $y_t:=s_{a+t}s_{a+k+t}$ for $t\in\dbn{0}{3}$. Suppose its six boundary bonds satisfy
\begin{equation}\label{window-eq:six}
(b_a,b_{a+1},b_{a+2};\,b_{a+k},b_{a+k+1},b_{a+k+2})
=(x_1,x_2,x_3;\,x_2,x_3,x_4).
\end{equation}
By Eq.~(\ref{window-eq:ratio}), $y_1=y_0x_1x_2$, $y_2=y_0x_1x_3$, and $y_3=y_0x_1x_4$. Hence
\begin{equation}\label{window-eq:four}
C_{k,a,4}(s)=y_0x_1(x_1+x_2+x_3+x_4).
\end{equation}

Thus the window has zero correlation exactly when $x_1,x_2,x_3,x_4$ contain two $+1$ signs and two $-1$ signs.

We use this equivalence to construct a reduction from the NP-hard all-split decision version of 2-2 Set Splitting \cite{Guruswami,SpielmanZhang2022}. Let the source instance have universe $U=\{v_1,\ldots,v_t\}$ and four-element sets $S_1,\ldots,S_m$, with $m\ge1$.

To realize the required boundary-bond configuration, assign the representative bond position $r_j:=j$ to each $v_j$.  For each $a\in\dbn{1}{m}$, define the clause lag $K_a:=a+3$, so $K_a\ge4$ and the clause lags are pairwise distinct. Let
\begin{equation*}
K_{\max}:=m+3,\quad W:=K_{\max}+4,\quad
A_a:=t+2+(a-1)W.
\end{equation*}

Write $S_a=(v_{a,1},v_{a,2},v_{a,3},v_{a,4})$ in an arbitrary fixed order. Define the six boundary-bond positions by
\begin{equation*}
\begin{aligned}
q_{a,1}&:=A_a,
&\quad q_{a,2}&:=A_a+1,
&\quad q_{a,3}&:=A_a+2,\\
q_{a,4}&:=A_a+K_a,
&\quad q_{a,5}&:=A_a+K_a+1,
&\quad q_{a,6}&:=A_a+K_a+2.
\end{aligned}
\end{equation*}

Associate these positions, respectively, with $v_{a,1},v_{a,2},v_{a,3},v_{a,2},v_{a,3},v_{a,4}$. Thus $v_{a,2}$ and $v_{a,3}$ each have two boundary-bond copies, as required by Eq.~(\ref{window-eq:six}).

\begin{lemma}[Clause geometry]\label{lem:window-clause-geom}
The bond positions and clause lags defined above satisfy the following properties.
\begin{enumerate}
\item The six boundary positions associated with each source set are pairwise distinct, and the corresponding position sets for distinct source sets are disjoint.
\item The length-four row at start $A_a$ and lag $K_a$ is a matching.
\item All clause lags are pairwise distinct.
\item The largest bond position defined before adding the copy windows is $O(t+m^2)$.
\end{enumerate}
\end{lemma}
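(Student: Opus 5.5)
The plan is to verify the four claims directly from the explicit formulas $A_a=t+2+(a-1)W$, $K_a=a+3$, $W=K_{\max}+4=m+7$. Each item reduces to elementary inequalities between these integers, so I will handle them in the order stated.

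For item 1, within one source set the six positions are $A_a,A_a+1,A_a+2$ and $A_a+K_a,A_a+K_a+1,A_a+K_a+2$. Since $K_a\ge4>2$, the second triple starts strictly after the first ends, so all six are distinct. Every position of block $a$ lies in the interval $\dbn{A_a}{A_a+K_a+2}$. Moreover
\begin{equation*}
A_a+K_a+2\le A_a+K_{\max}+2<A_a+W=A_{a+1}.
\end{equation*}
Hence the blocks occupy disjoint intervals. These intervals also lie strictly above the representative positions $\dbn{1}{t}$, because $A_1=t+2$. That last fact is not needed for item 1, but I will record it for later use.

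For item 2, the length-four window at start $A_a$ and lag $K_a$ selects the spin edges $\{A_a+u,A_a+K_a+u\}$ for $u\in\dbn{0}{3}$. The left endpoints form $\dbn{A_a}{A_a+3}$ and the right endpoints form $\dbn{A_a+K_a}{A_a+K_a+3}$. Because $K_a\ge4$, these two sets are disjoint, and each set consists of distinct integers. So the eight endpoints are distinct and the four edges form a matching. I also need the window to be admissible, i.e.\ $A_a+K_a+3\le N$. This holds once $N$ is taken to be at least one more than the largest bond position, which the final construction guarantees.

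Item 3 is immediate, since $a\mapsto a+3$ is injective.

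For item 4, the largest bond position is
\begin{equation*}
A_m+K_m+2=t+2+(m-1)(m+7)+m+5=O(t+m^2).
\end{equation*}

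The only step requiring care is item 1's cross-block separation, which uses $W>K_{\max}+2$. The remaining items are routine.
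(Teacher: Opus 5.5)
Your proposal is correct and follows the paper's proof essentially step for step. You use the same ingredients: the separation $K_a\ge4$ for both the two boundary triples and the four-edge matching, the cross-block bound $K_{\max}+2<W=A_{a+1}-A_a$, injectivity of $a\mapsto a+3$, and the evaluation $A_m+K_m+2=O(t+m^2)$. Your side remarks are correct additions that the paper leaves implicit: that the blocks lie above the representative positions $\dbn{1}{t}$, which guarantees $p_r\ne q_r$ when the copy lemma is applied, and that $N=Z+1$ makes the clause windows admissible.
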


\begin{proof}
For each $a$, $K_a\ge4$, so the boundary-bond intervals $\dbn{A_a}{A_a+2}$ and $\dbn{A_a+K_a}{A_a+K_a+2}$ are disjoint. Hence the six boundary positions associated with $S_a$ are pairwise distinct. All these positions lie between $A_a$ and $A_a+K_{\max}+2$, an interval of width $K_{\max}+2<W$. Since $A_{a+1}-A_a=W$, the boundary-position sets associated with distinct source sets are disjoint.

The length-four row associated with $S_a$ selects the four spin edges $\{A_a+j,A_a+K_a+j\}$ for $j\in\dbn{0}{3}$. Because $K_a\ge4$, their eight endpoints are distinct, so the row is a matching. The clause lags are pairwise distinct because $K_a=a+3$. Finally, the largest boundary-bond position is 
\begin{equation*}
\begin{aligned}
A_m+K_m+2
&= t + 2 + (m-1)(m+7) + (m+3) + 2\\
&=O(t+m^2).
\end{aligned}
\end{equation*}
\qed
\end{proof}

\subsection{Single-window NP-completeness}

\begin{theorem}[Window-length-four NP-completeness]\label{thm:single-window}
Zero-energy feasibility for single-window Masked-LABS is NP-complete when every nonempty window has length at most four. The problem remains NP-complete even under the simultaneous restrictions that every nonempty window has length exactly two or four, every active lag is used by exactly one window, and every selected lag row is a matching.
\end{theorem}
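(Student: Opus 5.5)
Membership in $\NP$ is inherited from Masked-LABS: single-window instances use the same canonical encoding as Eq.~(\ref{eq:masked-encoding}), and a spin certificate $s\in\Sset^N$ can be checked by evaluating the listed pair products. The substance is NP-hardness. I reduce from the all-split decision version of 2-2 Set Splitting, working in bond variables $b_j=s_js_{j+1}$. Every $b\in\Sset^{N-1}$ is realized by some spin sequence, so constraints on bonds can be treated as free constraints.

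\emph{Construction.} Assign the representative bond positions $r_j=j$, the clause lags $K_a$, the starts $A_a$, and the boundary-bond positions $q_{a,1},\ldots,q_{a,6}$ as defined before Lemma~\ref{lem:window-clause-geom}. For each source set $S_a$, place one length-four window at start $A_a$ and lag $K_a$. Its six boundary bonds are exactly $q_{a,1},\ldots,q_{a,6}$. Next, form the equality constraints $b_{q_{a,i}}=b_{r(v)}$, where $v$ is the element associated with $q_{a,i}$. This gives $6m$ constraints, and each has $p_r\ne q_r$ because the representative positions $\dbn{1}{t}$ lie below $A_1=t+2$. Apply Lemma~\ref{lem:window-bond-embed} with
\begin{itemize}
\item $P_0=\dbn{1}{t}\cup\{q_{a,i}\}$,
\item $F_0=\{K_1,\ldots,K_m\}$,
\item these $R=6m$ constraints.
\end{itemize}
Each constraint then becomes two length-two windows whose lags are new, pairwise distinct, outside $F_0$, and at least $2$. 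Let $N$ be one more than the largest bond position used. This bond position is either $\max_r z_r$, or it comes from the right endpoints of copy windows, which sit at $z_r+1$. Set the threshold to $B=0$.

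\emph{Structural properties.} Every active lag carries exactly one window. Clause lags are distinct by Lemma~\ref{lem:window-clause-geom}(3), and copy lags are distinct and avoid $F_0$ by Lemma~\ref{lem:window-bond-embed}(1),(3). Window lengths are two or four. Every row is a matching, by Lemma~\ref{lem:window-clause-geom}(2) for clause rows and by Lemma~\ref{lem:window-not} with lag $\ge2$ for copy rows. The size is polynomial because $D=O(t+m^2)$ and, by Eq.~(\ref{window-eq:zbound}), $\max_r z_r\le D+12m^2+72m^2$.

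\emph{Correctness.} For completeness, take a 2-2 splitting and set $b_{r(v)}$ to the sign of $v$. Copy each representative value to its boundary positions, and set $b_{z_r}=-b_{p_r}$. All remaining bonds are arbitrary. By Lemma~\ref{lem:window-not}, every copy window then vanishes. Each clause window satisfies Eq.~(\ref{window-eq:six}) with $x_i$ equal to the sign of $v_{a,i}$, so Eq.~(\ref{window-eq:four}) gives zero correlation there. For soundness, zero energy forces, via Lemma~\ref{lem:window-not}, $b_{p_r}=-b_{z_r}=b_{q_r}$ for each copy pair. Hence the boundary bonds match Eq.~(\ref{window-eq:six}), and Eq.~(\ref{window-eq:four}) forces $x_1+x_2+x_3+x_4=0$. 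So $v\mapsto b_{r(v)}$ is a valid splitting.

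\emph{Main obstacle.} The delicate step is ensuring that different windows never share a lag and never interfere. In particular, the copy windows' spin edges must not coincide with clause-window edges in a way that merges rows. Because rows are indexed by lag, distinct lags settle this automatically. What actually needs checking is that each copy window of lag $z_r-p$ starting at $p$ fits inside $\dbn{1}{N}$ with $a_k\le N-k-1$. This follows from $z_r\le N-2$, which holds by the choice of $N$.
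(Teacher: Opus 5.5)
Your proposal is correct and follows the paper's proof essentially step for step. It uses the same clause windows at start $A_a$ and lag $K_a$, the same $6m$ copy constraints fed into Lemma~\ref{lem:window-bond-embed} with $F_0=\{K_1,\ldots,K_m\}$, $N$ equal to one more than the largest bond position, and the same completeness and soundness arguments via Eqs.~(\ref{window-eq:six})--(\ref{window-eq:four}) and Lemma~\ref{lem:window-not}; your explicit check that $p_r\ne q_r$ is a welcome addition. One harmless slip: a copy window starting at $p$ with lag $z_r-p$ fits exactly when $z_r\le N-1$, which $N=\max_r z_r+1$ provides, so the stronger inequality $z_r\le N-2$ you cite is unnecessary and fails for the largest $z_r$ (your aside about endpoints at $z_r+1$ conflates spin positions with bond positions).
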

\begin{proof}
Membership in $\NP$ follows from the unary-length support-list encoding: a certificate is a spin sequence in $\Sset^N$, and all selected pair products can be evaluated in polynomial time.

\emph{Construction.}
For NP-hardness, take a 2-2 Set Splitting instance with universe $U=\{v_1,\ldots,v_t\}$ and four-element sets $S_1,\ldots,S_m$. Use the representative bond positions, clause lags, starts, and boundary-bond positions defined above, and include for each $S_a$ the length-four window at start $A_a$ and lag $K_a$.

Set $F_0:=\{K_1,\ldots,K_m\}$ to reserve the lags used by the length-four clause windows; the length-two copy windows must use lags outside $F_0$. Write $r(v_j):=r_j$ for the representative bond position of each source variable $v_j$. For every boundary-bond position $q_{a,i}$, impose an equality constraint between $b_{q_{a,i}}$ and the representative bond of its assigned variable, as specified above. Although $v_{a,2}$ and $v_{a,3}$ each label two boundary positions, every position requires a separate equality constraint. Thus each of the $m$ source sets contributes six constraints, giving $R=6m$.

Let $P_0$ consist of all representative and boundary-bond positions, and let $D:=\max P_0$. Apply Lemma~\ref{lem:window-bond-embed} to all $6m$ equality constraints. It introduces two length-two copy windows for each constraint, with pairwise distinct lags lying outside $F_0$.

After all copy windows have been added, let $Z$ be the largest bond position used, set $N:=Z+1$, and let $M$ be the mask consisting of all clause and copy windows. The target instance is $\enc_{\mathrm{Masked}}(N,M,0)$.

\emph{Polynomial size and geometry.}
By Lemma~\ref{lem:window-clause-geom}, $D=O(t+m^2)$. Since $R=6m$ and $|F_0|=m$, Lemma~\ref{lem:window-bond-embed} gives $Z\le D+2(6m)m+2(6m)^2=O(t+m^2)$. Since the target sequence length is $N=Z+1$, this bound also guarantees that its unary encoding has polynomial length. Moreover, the construction uses $m$ clause windows and $12m$ copy windows, each containing at most four pair terms. Thus the mask has polynomially many explicitly specified terms, and the complete target instance has polynomial encoding length.

Having established the polynomial size of the target instance, we verify that it also satisfies the required geometric restrictions. All clause lags are distinct, all copy lags are distinct, and the copy lags lie outside $F_0$; therefore every active lag supports exactly one window. Clause windows have length four and are matchings by Lemma~\ref{lem:window-clause-geom}; copy windows have length two and are matchings by Lemma~\ref{lem:window-not}.

\emph{Completeness.}
Suppose the source instance has a 2-2 splitting. Assign each representative bond its source sign and set every boundary-bond copy equal to its associated representative. For each equality constraint $b_p=b_q$, set the intermediate bond introduced by Lemma~\ref{lem:window-bond-embed} to $-b_p=-b_q$. Then both copy rows have zero correlation by Lemma~\ref{lem:window-not}. Assign all remaining bonds arbitrarily.

For every source set, the six boundary bonds then have the pattern in Eq.~(\ref{window-eq:six}), and the four representative signs are 2-2 split. Eq.~(\ref{window-eq:four}) therefore gives zero correlation for every clause row. Since every bond assignment is realizable by a spin sequence, the target instance has zero energy.

\emph{Soundness.}
Suppose the target instance has zero energy. Every length-two copy row has zero correlation. For each equality constraint, its two copy rows therefore force the boundary-bond copy to equal the corresponding representative. Thus the six boundary bonds of every source set have the pattern in Eq.~(\ref{window-eq:six}) for its four representative signs.

Each length-four clause row also has zero correlation, so Eq.~(\ref{window-eq:four}) implies that the four representative signs contain two $+1$ signs and two $-1$ signs. The representative bonds therefore define a valid 2-2 splitting of the source instance.

The construction is a polynomial-time many-one reduction. Hence the problem is NP-hard, and membership in $\NP$ gives NP-completeness.
\qed
\end{proof}

\section{Complete-lag selection}\label{sec:complete-lag}

\begin{definition}[Complete-lag selection]\label{def:complete-lag}
A complete-lag selection instance consists of a sequence length $N$ and an active-lag set $A\subseteq\dbn{1}{N-1}$. Each active lag retains its full LABS autocorrelation, so for $s\in\Sset^N$ the energy is
\begin{equation*}
E_A(s):=\sum_{k\in A}C_k(s)^2.
\end{equation*}
\end{definition}

Figure~\ref{fig:support-geometries}(e) illustrates this restriction to complete lag diagonals. The optimization problem minimizes $E_A(s)$ over $s\in\Sset^N$. A threshold instance is represented canonically by $\enc_{\mathrm{CLS}}(N,A,B)$, where $N$ is encoded in unary, $A$ as a canonical explicit list, and the nonnegative threshold $B$ in binary. The corresponding decision language is
\begin{equation}
L_{\mathrm{CLS}}
:=\{\enc_{\mathrm{CLS}}(N,A,B):\exists s\in\Sset^N,\ E_A(s)\le B\}.
\end{equation}

In Masked-LABS notation, complete-lag selection corresponds to the restriction $I_k^M=\dbn{1}{N-k}$ for $k\in A$ and $I_k^M=\varnothing$ otherwise. Standard LABS is recovered when every lag is active. This restriction also imposes a parity condition: each active correlation $C_k(s)$ contains $N-k$ terms and therefore cannot vanish when $N-k$ is odd. Consequently, zero-energy feasibility requires $N-k$ to be even for every active lag. In particular, when $N$ is odd, every active lag must be odd.

\subsection{Odd lengths}

For odd sequence lengths, the classical skew-symmetric construction of Golay \cite{Golay1972} provides binary sequences whose aperiodic autocorrelations vanish at every odd lag. Combined with the parity obstruction above, this result yields the following classification for complete-lag selection.

\begin{theorem}[Odd-length classification]\label{thm:odd-original}
Let $N=2h+1$. There exists a sequence $s\in\Sset^N$ such that $C_k(s)=0$ for every odd $k\in\dbn{1}{N-1}$. Consequently, a complete-lag-selection instance of odd length has zero energy if and only if every active lag is odd.
\end{theorem}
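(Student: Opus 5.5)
We construct a skew-symmetric sequence directly and then combine it with the parity obstruction noted just before the theorem. Let $N=2h+1$ and call $s\in\Sset^N$ \emph{skew-symmetric} if
\begin{equation*}
s_{h+1+j}=(-1)^j s_{h+1-j}\qquad\text{for all } j\in\dbn{1}{h}.
\end{equation*}
The first $h+1$ entries can be chosen freely and then determine the rest, so such sequences exist; for example, $s_{h+1-j}:=1$ for $j\in\dbn{0}{h}$ and $s_{h+1+j}:=(-1)^j$ for $j\in\dbn{1}{h}$. It therefore suffices to show that every skew-symmetric $s$ satisfies $C_k(s)=0$ for every odd $k$.

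\emph{Key step.} Fix an odd lag $k$. Since $N-k$ is even, $C_k(s)$ has an even number of terms. Consider the involution on the term indices $i\in\dbn{1}{N-k}$ given by $i\mapsto i':=N-k+1-i$. It sends the pair $\{i,i+k\}$ to $\{N+1-(i+k),\,N+1-i\}$, which is the reflection of the pair about the centre $h+1$. Because $N-k+1$ is odd, $i'\ne i$, so there are no fixed points and the terms split into disjoint pairs.

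Write $i=h+1-u$ and $i+k=h+1+w$ with $w=k-u$. The skew relation, extended as $s_{h+1+j}=(-1)^j s_{h+1-j}$ for every $j\in\dbn{-h}{h}$, is consistent at $j=0$ and at negative $j$ because $(-1)^{-j}=(-1)^j$. Applying it to both endpoints of the reflected pair gives
\begin{equation*}
s_{i'}s_{i'+k}=s_{h+1-w}\,s_{h+1+u}=(-1)^{w}s_{h+1+w}\,(-1)^{u}s_{h+1-u}=(-1)^{u+w}s_is_{i+k}=(-1)^k s_is_{i+k}=-s_is_{i+k}.
\end{equation*}
So the paired terms cancel, and $C_k(s)=0$. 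This sign bookkeeping, including the uniform treatment of $u,w$ of either sign (endpoints on the same side of the centre), is where care is needed. It is the main obstacle, though routine once the extended skew relation is stated for all $j$.

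\emph{Classification.} If every active lag is odd, take a skew-symmetric $s$; then $E_A(s)=0$. Conversely, if some active lag $k$ is even, then $N-k$ is odd, and $C_k(s)$ is a sum of an odd number of $\pm1$ terms. It is therefore nonzero for every $s$, exactly as in Lemma~\ref{lem:pair-odd}, so $E_A(s)\ge1>0$. The case $A=\varnothing$ is trivially feasible and is consistent with the statement.
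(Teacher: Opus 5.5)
Your proposal is correct and follows essentially the same route as the paper. You use the same skew-symmetric sequence: your condition $s_{h+1+j}=(-1)^j s_{h+1-j}$ is the paper's mirror rule $s_{N+1-i}=(-1)^{h+i+1}s_i$ reindexed, and your example coincides with its choice $s_1=\cdots=s_{h+1}=1$. You also use the same fixed-point-free pairing $i\mapsto N-k+1-i$ and the same parity obstruction for even lags, and the only difference is bookkeeping: the paper writes $q_j=s_js_{N+1-j}=(-1)^{h+j+1}$ to get $T_iT_{i'}=q_iq_{i+k}=(-1)^k$, whereas you track centred offsets $u,w$ with $u+w=k$.
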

\begin{proof}
Necessity follows from the parity observation above: when $N$ is odd and $k$ is even, $N-k$ is odd, so $C_k(s)\ne0$.

For sufficiency, set $s_1=\cdots=s_{h+1}=1$ and define the remaining entries by $s_{N+1-i}:=(-1)^{h+i+1}s_i$ for $i\in\dbn{1}{h}$. Let $q_j:=s_js_{N+1-j}$. A direct check from the mirror rule gives $q_j=(-1)^{h+j+1}$ for every $j\in\dbn{1}{N}$.

Fix an odd lag $k$. For each term $T_i:=s_is_{i+k}$, pair it with the term beginning at $i':=N-k+1-i$. Since $N-k$ is even, this pairing has no fixed points. Moreover, $T_iT_{i'}=q_iq_{i+k}=(-1)^k=-1$. Thus every paired pair of terms has opposite signs, and all terms in $C_k(s)$ cancel. Hence $C_k(s)=0$ for every odd $k$.
\qed
\end{proof}

\subsection{Even lengths}

Let $N=2h$ and separate the odd and even spin positions by setting $u_j:=s_{2j-1}$ and $v_j:=s_{2j}$ for $j\in\dbn{1}{h}$. For $r\in\dbn{1}{h-1}$, splitting the start indices by parity gives
\begin{equation}\label{eq:interleave}
C_{2r}(s)=C_r(u)+C_r(v).
\end{equation}

For even $N$, the parity condition excludes every instance containing an active odd lag. It therefore remains only to analyze active even lags, for which Eq.~(\ref{eq:interleave}) gives the following exact reformulation.

\begin{definition}[SLCP]\label{def:slcp}
Let $h\ge1$ and $R\subseteq\dbn{1}{h-1}$. The \emph{selected-lag complementary-pair feasibility problem}, abbreviated $\SLCP(h,R)$, asks whether there exist $u,v\in\Sset^h$ such that $C_r(u)+C_r(v)=0$ for every $r\in R$.
\end{definition}

\begin{proposition}[Even-lag equivalence]\label{prop:slcp-equiv}
Let $N=2h$ and suppose the active-lag set is $A=\{2r:r\in R\}$ for some $R\subseteq\dbn{1}{h-1}$. The complete-lag instance has zero energy if and only if $\SLCP(h,R)$ is feasible.
\end{proposition}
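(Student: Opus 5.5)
The plan is to make the interleaving identity in Eq.~(\ref{eq:interleave}) rigorous and then observe that the deinterleaving map is a bijection between spin sequences. Fix $N=2h$. Define $\phi:\Sset^N\to\Sset^h\times\Sset^h$ by $\phi(s)=(u,v)$ with $u_j=s_{2j-1}$ and $v_j=s_{2j}$. Its inverse interleaves $u$ and $v$ back into a length-$2h$ sequence, so $\phi$ is a bijection.

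The first step is to verify Eq.~(\ref{eq:interleave}) for every $r\in\dbn{1}{h-1}$. Start from $C_{2r}(s)=\sum_{i=1}^{2h-2r}s_is_{i+2r}$ and split the start indices by parity. For odd $i=2j-1$, the range $i\le 2h-2r$ gives $j\in\dbn{1}{h-r}$, and the term is $s_{2j-1}s_{2(j+r)-1}=u_ju_{j+r}$. For even $i=2j$, again $j\in\dbn{1}{h-r}$, and the term is $v_jv_{j+r}$. The two resulting sums are exactly $C_r(u)$ and $C_r(v)$, each computed at length $h$. The only point needing care is this boundary check, namely that both parity classes run over $j\in\dbn{1}{h-r}$. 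It works precisely because $N$ is even, and I expect it to be the sole (minor) obstacle.

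The second step handles the equivalence itself. Because $E_A(s)$ is a sum of squares, $E_A(s)=0$ holds if and only if $C_{2r}(s)=0$ for every $r\in R$. By the identity above, this is equivalent to $C_r(u)+C_r(v)=0$ for all $r\in R$, where $(u,v)=\phi(s)$.

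For the forward direction, a zero-energy $s$ yields the witness $\phi(s)$ for $\SLCP(h,R)$. For the reverse direction, a feasible pair $(u,v)$ gives $s=\phi^{-1}(u,v)$, which has zero energy.

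Finally, I would note the degenerate cases. When $R=\varnothing$ both sides hold trivially. When $h=1$ the index set $\dbn{1}{h-1}$ is empty, so $R=\varnothing$ and the statement again holds trivially.
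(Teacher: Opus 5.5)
Your proposal is correct and takes the same approach as the paper: the interleaving identity Eq.~(\ref{eq:interleave}), combined with the bijection $s\mapsto(u,v)$, turns vanishing of every active $C_{2r}(s)$ into feasibility of $\SLCP(h,R)$ and back. Your boundary check that both parity classes of start indices run over $j\in\dbn{1}{h-r}$ verifies Eq.~(\ref{eq:interleave}), which the paper states without proof.
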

\begin{proof}
By Eq.~(\ref{eq:interleave}), the condition $C_{2r}(s)=0$ for every $r\in R$ is exactly $C_r(u)+C_r(v)=0$ for the odd- and even-position subsequences $u$ and $v$. Conversely, any feasible pair $u,v$ can be interleaved uniquely by setting $s_{2j-1}=u_j$ and $s_{2j}=v_j$. Thus the two feasibility conditions are equivalent.
\qed
\end{proof}

\section{Conclusions}\label{sec:discussion}

This paper analyzes standard LABS, in which the sequence length fixes all correlation supports, and the input-dependent variants Pair-Set-LABS and Masked-LABS, in which the selected pair interactions are explicitly specified. For the unary standard-LABS decision language, membership in $\NP$ and the sparsity of its normalized form imply that Karp NP-hardness would force $\Ptime=\NP$. This consequence does not establish a polynomial-time algorithm for standard LABS, whose general decision complexity remains unresolved.

For the input-dependent support models, zero-energy feasibility exhibits a sharp complexity threshold. Odd-cardinality rows cannot have zero correlation, while the zero conditions for two-term rows reduce to affine equations over $\Ftwo$. Consequently, instances with at most three terms per group or row are polynomial-time decidable. In contrast, allowing four terms yields NP-completeness for both Pair-Set-LABS and Masked-LABS.

NP-completeness also persists for single-window Masked-LABS, where each active lag has a single contiguous support interval. The reduction remains valid using only length-two and length-four windows, with distinct active lags and matching rows. Thus the hardness does not depend on arbitrary pair groups or disconnected supports within a lag.

A different situation arises when every active lag retains its full LABS correlation. For odd sequence lengths, zero-energy feasibility admits an exact parity classification. For even lengths, active odd lags preclude zero energy, while the remaining case is equivalent to $\SLCP$. Its general complexity remains open. This identifies a boundary not resolved by the present reductions: the effect of requiring complete lag correlations rather than allowing the input to select individual terms within each lag.

{\small\rmfamily \noindent\textbf{Declarations}\quad During the preparation of this work, the author used Claude 3.5 Sonnet to improve English phrasing and readability. The author reviewed and edited the content as needed and takes full responsibility for the content of the publication.}

\vspace{6pt}
{\small\rmfamily \noindent\textbf{Data availability}\quad There exists no data associated with this manuscript.}

\vspace{6pt}
{\small\rmfamily \noindent\textbf{Conflict of Interest}\quad The author declares no conflict of interest.}

\appendix
\section*{Appendix}
\section{Distinct Positive 1-in-3-SAT}\label{app:distinct-p13}

%Schaefer \cite{Schaefer} considers the Boolean relation $R:\Bset^3\to\Bset$, where $R(x,y,z)=1$ if and only if exactly one of $x,y,z$ equals $1$, equivalently $x+y+z=1$ over the integers. In the associated satisfiability problem, each constraint is an application $R(x_i,x_j,x_\ell)$ of this relation, and the same variable may occur in more than one argument position. This repetition is problematic for the Pair-Set-LABS clause construction because two prescribed edges may then coincide, and duplicate edges are identified in the set $E_a$. 

%We therefore use a restricted decision problem in which every constraint $R(x_i,x_j,x_\ell)$ contains three pairwise distinct variables and no variable is negated.

The Boolean relation $R:\Bset^3\to\Bset$ is defined by $R(x,y,z)=1$ if and only if exactly one of $x,y,z$ equals $1$,
equivalently $x+y+z=1$ over the integers. In the associated satisfiability problem, each constraint has the form $R(x_i,x_j,x_\ell)$, and variables may be repeated among its argument positions. This problem belongs to the Boolean constraint framework studied by Schaefer \cite{Schaefer}.

Repeated variables are inconvenient for the Pair-Set-LABS clause construction, because two prescribed edges may then coincide and duplicate edges are identified in the set $E_a$. We therefore consider the restriction in which every constraint $R(x_i,x_j,x_\ell)$ contains three pairwise distinct variables and all variables occur positively.

\begin{definition}[Distinct Positive 1-in-3-SAT]\label{pair-src-def:distinct-p13}
An instance consists of a finite set $X$ of Boolean variables and a conjunction of constraints $R(x_i,x_j,x_\ell)$, where $x_i,x_j,x_\ell\in X$ are pairwise distinct. The question is whether there exists an assignment of the variables in $X$ to $\Bset$ such that every constraint evaluates to $1$.
\end{definition}

\begin{lemma}[Forcing gadgets]\label{pair-src-lem:forcing}
For pairwise distinct Boolean variables $x,p,q,r$, define
\begin{equation}
\begin{aligned}
F_0(x;p,q,r)
&:=R(x,p,q)\wedge R(x,p,r)\wedge R(p,q,r),\\
F_1(x;p,q,r)
&:=R(x,p,q)\wedge R(x,p,r)\wedge R(x,q,r).
\end{aligned}
\end{equation}

Every relation in these formulas contains three pairwise distinct variables. Moreover, $F_0(x;p,q,r)$ is satisfiable over $p,q,r$ if and only if $x=0$, while $F_1(x;p,q,r)$ is satisfiable over $p,q,r$ if and only if $x=1$. The respective satisfying assignments are uniquely $(x,p,q,r)=(0,1,0,0)$ and $(x,p,q,r)=(1,0,0,0)$.
\end{lemma}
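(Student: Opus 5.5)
The argument is a finite case check on the four Boolean values, since $R(a,b,c)=1$ means exactly $a+b+c=1$ over the integers. Distinctness is immediate. Each relation in $F_0$ and $F_1$ is applied to three of the four pairwise distinct symbols $x,p,q,r$, for example $(x,p,q)$ or $(p,q,r)$. So no argument is repeated.

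\emph{The gadget $F_0$.} Assume $F_0(x;p,q,r)$ holds. Subtracting the equations $x+p+q=1$ and $x+p+r=1$ gives $q=r$. Then $p+q+r=1$ reads $p+2q=1$, and over $\{0,1\}$ this forces $q=r=0$ and $p=1$. Substituting into $x+p+q=1$ gives $x=0$. Conversely, when $x=0$ the assignment $(p,q,r)=(1,0,0)$ satisfies all three relations, as a direct check shows. Hence $F_0$ is satisfiable over $p,q,r$ exactly when $x=0$, and its unique satisfying assignment is $(0,1,0,0)$.

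\emph{The gadget $F_1$.} Pairwise subtraction of the three equations $x+p+q=1$, $x+p+r=1$ and $x+q+r=1$ gives $q=r$ and $p=q$, so $p=q=r$. The first equation then reads $x+2p=1$, which forces $p=0$ and $x=1$. Conversely, $(1,0,0,0)$ satisfies all three relations. Hence $F_1$ is satisfiable over $p,q,r$ exactly when $x=1$, and its unique satisfying assignment is $(1,0,0,0)$.

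The only point that needs care is the uniqueness claim. It covers the whole tuple $(x,p,q,r)$, and it follows from the derivations above: in each case the equations themselves force every variable, including $x$, with no choice left.
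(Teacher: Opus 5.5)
Your proof is correct and follows essentially the same route as the paper: subtract the first two constraints to get $q=r$, then let the remaining constraint force every variable (and hence the unique assignments). The differences are cosmetic. In $F_1$ you also subtract the second and third constraints to get $p=q$ directly, whereas the paper solves $x+2q=1$ first and then back-substitutes to obtain $p=0$. You also make the distinctness check explicit, which the paper leaves implicit.
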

\begin{proof}
For $F_0$, the first two constraints imply $q=r$. The third then gives $p+2q=1$, which forces $q=r=0$ and $p=1$. Substituting into either of the first two constraints gives $x=0$. Conversely, $(0,1,0,0)$ satisfies all three constraints.

For $F_1$, the first two constraints again imply $q=r$. The third gives $x+2q=1$, which forces $q=r=0$ and $x=1$. Substituting into either of the first two constraints gives $p=0$. Conversely, $(1,0,0,0)$ satisfies all three constraints.
\qed
\end{proof}

\begin{lemma}[NP completeness]\label{pair-src-lem:distinct-p13-npc}
Distinct Positive 1-in-3-SAT is NP-complete.
\end{lemma}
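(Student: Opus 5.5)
Membership in $\NP$ is immediate: a truth assignment to $X$ is a certificate, and checking each constraint $R(x_i,x_j,x_\ell)$ takes constant time. For NP-hardness I will reduce from Positive 1-in-3-SAT, where repeated variables inside a constraint are allowed. By Schaefer's dichotomy \cite{Schaefer} this problem is NP-complete: the relation $R$ is not $0$-valid, $1$-valid, Horn, dual-Horn, bijunctive or affine. The reduction is a local rewriting that eliminates repeated variables inside a constraint, and Lemma~\ref{pair-src-lem:forcing} supplies the needed constant gadgets.

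First, I classify each source constraint $R(y_1,y_2,y_3)$ by its repetition pattern. If all three arguments are distinct, I copy it unchanged. If all three are equal, $R(x,x,x)$ forces $3x=1$, which is impossible, so I output a fixed unsatisfiable distinct instance. One choice is $F_0(w;p,q,r)\wedge F_1(w;p',q',r')$ on fresh variables, which forces $w=0$ and $w=1$ simultaneously. If exactly two arguments coincide, say $R(x,x,y)$ with $x\ne y$, the relation requires $2x+y=1$, equivalently $x=0$ and $y=1$.

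For each such two-equal constraint, I introduce fresh variables $p,q,r$ and $p',q',r'$ that are distinct from everything else. I then emit $F_0(x;p,q,r)$ and $F_1(y;p',q',r')$. By Lemma~\ref{pair-src-lem:forcing}, every relation in these gadgets has three pairwise distinct variables. The gadgets are satisfiable over their fresh variables exactly when $x=0$ and $y=1$, and each has a unique extension. Since the fresh variables are private to each gadget, satisfying assignments of the source instance extend to the target, and target solutions restrict to source solutions. The output size is linear in the input.

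The main obstacle is justifying NP-hardness of the source problem, Positive 1-in-3-SAT with repetitions permitted. Citing Schaefer settles it. Alternatively, one can note that the classical monotone 1-in-3-SAT reduction from 3-SAT already produces distinct-variable constraints, which would make the rewriting above a robustness step. To keep the argument self-contained, I will cite Schaefer and carry out the explicit rewriting above. The remaining check is that the gadgets' fresh variables do not interact across constraints, and this holds by construction.
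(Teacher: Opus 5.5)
Your proof is correct and follows the paper's argument essentially step for step. Membership comes from an assignment certificate, and hardness comes from reducing Schaefer's one-in-three satisfiability (repeats allowed): distinct-variable constraints are copied, and $R(x,x,y)$ and $R(x,x,x)$ are replaced by privately indexed copies of the forcing gadgets $F_0$ and $F_1$ from Lemma~\ref{pair-src-lem:forcing}. The only differences are cosmetic: in the all-equal case you force a fresh variable $w$ rather than $x$ itself, and you explicitly check that $R$ lies outside Schaefer's tractable classes.
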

\begin{proof}
Membership in $\NP$ follows by using a Boolean assignment as a certificate. Each constraint has three argument positions and can be checked in constant time, so all constraints can be verified in polynomial time.

For NP-hardness, reduce from Schaefer's one-in-three satisfiability problem. Let $\Phi$ be an arbitrary instance. Construct an instance $\Phi'$ by replacing each constraint independently according to the number of distinct variables among its three argument positions. 

\emph{a) Three distinct variables.} If the three arguments are pairwise distinct, include the same constraint in $\Phi'$.

\emph{b) Exactly two distinct variables.}Since $R$ is symmetric in its arguments, such a constraint can be written as $R(x,x,y)$ with $x\ne y$. It is satisfied if and only if $2x+y=1$ over the integers, which occurs exactly when $x=0$ and $y=1$. Replace it by $F_0(x;p_0,q_0,r_0)\wedge F_1(y;p_1,q_1,r_1)$, using six auxiliary variables that are pairwise distinct, distinct from $x,y$, and used in no other replacement. By Lemma~\ref{pair-src-lem:forcing}, for fixed $x,y$ this conjunction can be satisfied by the auxiliary variables if and only if $x=0$ and $y=1$. Hence the replacement is equivalent to the original constraint.

\emph{c) One distinct variable.} Such a constraint has the form $R(x,x,x)$ and is unsatisfiable, since $3x$ is either $0$ or $3$, never $1$. Replace it by $F_0(x;p_0,q_0,r_0)\wedge F_1(x;p_1,q_1,r_1)$, using six auxiliary variables that are pairwise distinct, distinct from $x$, and used in no other replacement. By Lemma~\ref{pair-src-lem:forcing}, the first formula requires $x=0$ while the second requires $x=1$; therefore the replacement is also unsatisfiable.

Every constraint of $\Phi'$ contains three pairwise distinct unnegated variables. Moreover, the local replacements above preserve satisfiability. Because the auxiliary variables introduced for different constraints are disjoint, any satisfying assignment of $\Phi$ can be extended independently to a satisfying assignment of $\Phi'$. Conversely, restricting any satisfying assignment of $\Phi'$ to the original variables satisfies every constraint of $\Phi$. Hence $\Phi$ is satisfiable if and only if $\Phi'$ is satisfiable.

Each constraint of $\Phi$ produces at most six constraints and six auxiliary variables in $\Phi'$. Thus the number of variables and constraints increases by only a constant factor, and the transformation is computable in polynomial time. Therefore $\Phi\mapsto\Phi'$ is a polynomial-time many-one reduction. Since the source problem is NP-complete, Distinct Positive 1-in-3-SAT is NP-hard; together with membership in $\NP$, it is NP-complete.
\qed
\end{proof}

\section{Masked-LABS embedding}\label{app:masked-embedding}

\begin{proof}[Lemma~\ref{lem:unique-lag-embed}]
We implement the equality constraints sequentially, introducing two masked rows for each constraint. Initially, set $P:=P_0$ and $F:=F_0$. At each step, $P$ contains the original spin positions together with the auxiliary positions introduced so far, while $F$ contains the initially forbidden lags together with those assigned to earlier rows. New positions and lags must avoid these respective sets. For the next constraint $s_p=s_q$ with $p\ne q$, we seek an auxiliary position $z>\max P$ with the same parity as $p$ and $q$.

Once $z$ has been chosen, define the midpoint positions $m_p:=(p+z)/2$ and $m_q:=(q+z)/2$. Since $p,q,z$ have the same parity, both midpoints are integers and lie strictly between their respective endpoints. The first cardinality-two masked row consists of the edges $\{p,m_p\}$ and $\{m_p,z\}$, which share the lag $k_p:=(z-p)/2$. Similarly, the second row consists of $\{q,m_q\}$ and $\{m_q,z\}$, with lag $k_q:=(z-q)/2$. By Lemma~\ref{lem:spin-not-wire}, zero correlation in these rows gives $s_z=-s_p$ and $s_z=-s_q$, and therefore $s_p=s_q$.

Because $p\ne q$, the lags $k_p$ and $k_q$ are distinct. We choose $z$ so that $k_p,k_q\notin F$ and $m_p,m_q\notin P$. For each $f\in F$, a conflict $k_p=f$ or $k_q=f$ can occur only when
$z=p+2f$ or $z=q+2f$, so lag conflicts exclude at most $2|F|$ candidate values of $z$. Likewise, for each $u\in P$, a collision $m_p=u$ or $m_q=u$ can occur only when $z=2u-p$ or $z=2u-q$, so collisions with previously used spin positions exclude at most $2|P|$ candidates. Since $z>\max P$, the position $z$ itself is not already in $P$. Hence among $2|F|+2|P|+1$ successive candidates of the required parity above $\max P$, at least one is admissible.

Once an admissible $z$ has been found, update the sets of used spin positions and unavailable lags by setting $P\leftarrow P\cup\{z,m_p,m_q\}$ and $F\leftarrow F\cup\{k_p,k_q\}$. Each equality constraint introduces at most three spin positions and two lags. Consequently, after processing $r$ equality constraints, $|P|\le |P_0|+3r$ and $|F|\le |F_0|+2r$.

For each new constraint, candidate positions $z>\max P$ are examined in increasing order, with the required parity, until an admissible position is found. The number of candidates examined at each step is $O(|P_0|+|F_0|+r)$. Therefore, the increase in the largest used spin position at each step is polynomially bounded. Summing these increases over the $R$ equality constraints yields the polynomial bound on the largest spin position stated in the lemma. The total number of candidate tests is also polynomial, so the construction can be carried out in polynomial time.
\qed
\end{proof}

\begin{proof}[Lemma~\ref{lem:masked-clause-place}]
All spin positions associated with clause $a$ are even by construction. The four positions $g_{a,r}=B_a+2(r-1)$, $r\in\dbn{1}{4}$, are pairwise distinct, and so are the positions $x_{a,r}=g_{a,r}+K_a$. Since $K_a\ge10$ while $g_{a,4}-g_{a,1}=6$, every $x_{a,r}$ lies strictly to the right of every $g_{a,s}$. Hence the eight endpoints are distinct, and the four edges $\{g_{a,r},x_{a,r}\}$ form a matching.

The spin positions associated with clause $a$ lie between $B_a$ and $B_a+K_a+6$. Since $K_a+6<K_a+8\le K_{\max}+8=W/2<W$ and consecutive bases differ by $W$, the position sets associated with distinct clauses are disjoint. Moreover, $K_a=2(a+4)$, so the clause lags are pairwise distinct.

Finally, the largest spin position is $B_m+K_m+6$. Since $B_m=2(t+2)+mW=O(t+m^2)$ and $K_m+6=O(m)$, the largest spin position is $O(t+m^2)$.
\qed
\end{proof}

\section{Single-window embedding}\label{app:single-window-embedding}

\begin{proof}[Lemma~\ref{lem:window-bond-embed}]
Set $F:=F_0$ and let $Z:=D$. Before each equality constraint is processed, $F$ contains the lags unavailable to subsequent windows, and $Z$ denotes the largest bond position assigned by the preceding steps (initially $Z=D$).

Consider an equality constraint $b_p=b_q$ with $p\ne q$. Choose an intermediate bond position $z>Z$ and use two length-two windows whose boundary-bond position pairs are $(p,z)$ and $(q,z)$. Their lags are $z-p$ and $z-q$, which are distinct because $p\ne q$.

To ensure that neither new lag belongs to $F$, examine candidate values of $z$ beginning at $Z+2$. For each $f\in F$, a conflict can occur only when $z=p+f$ or $z=q+f$, so at most $2|F|$ candidates are excluded. Hence among $2|F|+1$ consecutive candidates, at least one is admissible. Because $z\ge Z+2$ and $p,q\le D\le Z$, both new lags are at least $2$, and the corresponding rows are matchings by Lemma~\ref{lem:window-not}.

Choose the first admissible $z$, add the lags $z-p$ and $z-q$ to $F$, and replace $Z$ by $z$. By Lemma~\ref{lem:window-not}, the two rows have zero correlation exactly when $b_z=-b_p$ and $b_z=-b_q$. The two equalities are simultaneously satisfied if and only if $b_p=b_q$, so the two windows implement the required copy constraint.

Before processing the $r$th equality constraint, $|F|=|F_0|+2(r-1)$. Since at most $2|F|+1$ candidates are examined starting from $Z+2$, the increase in $Z$ is at most $2|F_0|+4(r-1)+2$. Summing these increases over all $R$ constraints gives $\max_r z_r\le D+2R|F_0|+2R^2$, which is the bound stated in the lemma. The same estimate shows that only polynomially many candidate positions are examined, so the construction is computable in polynomial time.
\qed
\end{proof}

\bibliographystyle{spmpsci}
\bibliography{references}
\end{document}